\documentclass[manuscript,screen]{acmart}
\acmJournal{TQC}
\setcopyright{none}
\usepackage{amsmath,amsfonts,array,booktabs,graphicx,textcomp,xcolor,algorithm,tikz}
\microtypesetup{expansion=false}
\newtheorem{proposition}{Proposition}
\newtheorem{lemma}{Lemma}
\graphicspath{{figures/}}
\newif\ifRoutingComplete
\RoutingCompletetrue
\newcommand{\FivePoorWins}{6}
\newcommand{\FourToken}{0.1558}
\newcommand{\FourPenalty}{0.0373}

\newcommand{\FourWins}{35}
\newcommand{\FiveToken}{0.0294}
\newcommand{\FivePenalty}{0.0016}

\newcommand{\UnseenFourToken}{0.1526}
\newcommand{\UnseenFourPenalty}{0.0349}

\newcommand{\UnseenFourWins}{34}

\begin{document}
\title{Exact Diagonal Completion on Reachable Subspaces:\\
Application to QAOA Placement}
\author{Owen Friedewald}
\orcid{0009-0003-6847-1706}
\email{omfvq4@missouri.edu}
\affiliation{\institution{University of Missouri}
\department{Department of Electrical Engineering and Computer Science}
\city{Columbia}\state{Missouri}\country{USA}}
\author{Ali Shiri Sichani}
\orcid{0000-0002-7239-1341}
\email{asp9f@missouri.edu}
\affiliation{\institution{University of Missouri}
\department{Department of Electrical Engineering and Computer Science}
\city{Columbia}\state{Missouri}\country{USA}}
\author{Chi-Ren Shyu}
\orcid{0000-0001-9197-9522}
\email{shyuc@missouri.edu}
\affiliation{\institution{University of Missouri}
\department{Department of Electrical Engineering and Computer Science}
\city{Columbia}\state{Missouri}\country{USA}}
\renewcommand{\shortauthors}{Friedewald et al.}

\begin{abstract}
Unused encoding states offer opportunities to simplify quantum circuits. For
algorithms restricted to reachable subspaces, unspecified diagonal-operator
entries can be optimized without changing ideal computation. We investigate
exact diagonal completion for the quantum approximate optimization algorithm
(QAOA) applied to placement, using permutation-preserving register swaps. We
construct exact Manhattan-distance operators through weighted-$\ell_1$
optimization of Walsh coefficients and a sparse recurrence requiring
$O(\sqrt{m})$ terms on balanced rectangles, avoiding $O(m^4)$ dense constraint
storage.

Across 160 geometries, weighted-$\ell_1$ completion reduces controlled-NOT (CX)
counts relative to four alternative extensions in all 96 cases with unused
binary codes under Gray-code synthesis. On an independently specified 60-case
cohort, median reductions relative to virtual-coordinate extension are 28.0\%,
53.9\%, and 21.6\% at six, nine, and twelve sites. Under generic diagonal
synthesis, reductions decrease to 10.9\%, 1.3\%, and 0.7\%, demonstrating compiler
dependence. Additional ancillas reduce mixer serialization, but token circuits
remain deeper than one-hot baselines. Ideal placement simulations show
baseline-dependent solution quality, with classical search performing better.
OpenROAD integration takes 72 QAOA and 216 classical placements across six RTL
designs through clock-tree synthesis and global routing with zero overflow.
Completion improves phase construction; no end-to-end advantage is established.
\end{abstract}

\ccsdesc[500]{Theory of computation~Quantum computation theory}
\ccsdesc[300]{Hardware~Physical design (EDA)}
\ccsdesc[300]{Software and its engineering~Compilers}
\keywords{constrained QAOA, placement, exact diagonal synthesis, Walsh expansion,
XY mixers, partially specified operators}
\maketitle

\section{Introduction}
A quantum circuit is often synthesized from an operator specified on the full
Hilbert space, even when the surrounding algorithm can reach only a small
part of it. For a diagonal cost operator, entries outside the reachable
computational-basis states have no effect on the ideal algorithm. Choosing
these entries is a compilation opportunity: different exact extensions of
the same required action can have different Walsh spectra and different
circuit costs. We call this choice \emph{exact diagonal completion}.

We study completion through a placement problem: assigning circuit cells
to physical sites while reducing the lengths of their interconnections
(nets). Each cell and each labeled vacancy has a binary register that stores
a site index. Partial swaps of entire registers preserve a permutation of the valid site labels,
so every measured placement is legal. When the number of sites is not a
power of two, some binary labels are unused. A distance operator may take
arbitrary values on those labels while remaining exact on every state the
algorithm visits. The resulting freedom is separate from the familiar task
of optimizing a parity network for an already specified diagonal
\cite{welch2014diagonal,amy2018cnotphase,shende2005incomplete}.

The contributions are as follows.
\begin{enumerate}
\item We construct exact Manhattan-distance phases using a weighted-$\ell_1$
linear program and a sparse coordinate recurrence. The former handles
arbitrary assignments of labels to coordinates; the latter avoids the dense
constraint matrix when register bits encode coordinates. We prove their
agreement with the required action on the reachable states.
\item Across 160 geometries, we show that completion's gate savings depend
strongly on synthesis. L1 completion uses fewer CX gates than four
inexpensive extensions in all 96 cases with unused labels under Gray
synthesis, but the gain is much smaller under generic diagonal synthesis. A smaller coefficient norm does
not certify a minimum-gate circuit.
\item We evaluate the consequences for full-circuit resources and placement.
Mixer depth and baseline choice limit the benefit, while a separate study
on six RTL designs demonstrates that selected legal placements can pass
through clock-tree synthesis and global routing. These results establish
an integration path without demonstrating a quantum placement advantage.
\end{enumerate}

Section~\ref{sec:encoding} defines the reachable subspaces and circuits.
Section~\ref{sec:completion} develops the completion constructions, followed
by the experimental design in Section~\ref{sec:design}. We present synthesis
results first, then full-circuit resources and placement quality in
Sections~\ref{sec:phase-results}--\ref{sec:quality}, including the OpenROAD
integration results in Section~\ref{sec:eda-results}.
Appendix~\ref{sec:notation} collects the principal notation.

\section{Related Work}
QAOA alternates cost and mixing operations \cite{farhi2014qaoa}; the quantum
alternating operator ansatz extends this approach to constrained spaces,
including permutations \cite{hadfield2019qaoa}. XY mixers preserve excitation
number \cite{wang2019xymixers}, while other constructions address assignment,
scheduling, general mixer synthesis and Grover mixing
\cite{stollenwerk2020planning,palackal2024permutation,fuchs2022mixers,bartschi2020grover}.
Constraint-preserving annealing drivers provide related antecedents
\cite{hen2016constrained}. We use established mixer principles to obtain the
placement invariant; the permutation and ring mixers are not new families.

Encoding choices must account for both objective and mixer circuits.
Sawaya et al. compare compact, one-hot, Gray and other encodings on this
basis \cite{sawaya2023encoding}. Compact permutation and domain-wall
constructions offer additional width--circuit tradeoffs
\cite{glos2022spaceefficient,chancellor2019domainwall}, while Prog-QAOA
compiles classical objective and constraint programs \cite{bako2025progqaoa}.
Application studies also show sensitivity to penalty and compilation choices
\cite{schmidbauer2025path}. Our full-circuit comparisons follow this
perspective and examine the particular costs of a completed distance phase.

Walsh expansions and Gray-code traversal are established methods for diagonal
synthesis \cite{welch2014diagonal}. Phase-polynomial optimization targets
parity-network complexity \cite{amy2018cnotphase,nam2018optimization}, and
phase gadgets and ZX rewriting provide further transformations
\cite{cowtan2020phasegadgets,kissinger2020pyzx}. We compare with generic
diagonal synthesis as implemented in Qiskit
\cite{shende2006synthesis,javadiabhari2024qiskit}. Our related studies optimize
routed cost by learning parity-term orders \cite{friedewald2026shielded}
and selecting among orders with equal logical support cost
\cite{friedewald2026plateau}. These methods optimize a specified operator. Completion instead chooses its unspecified entries
before applying synthesis, using the freedom studied for partially specified
quantum operators \cite{shende2005incomplete} and, classically, for logic
functions with don't-care inputs \cite{brayton1984logic}. Our contribution is
the distance-specific constructions and their evaluation against alternative
extensions and synthesis methods. Neither weighted basis pursuit
\cite{chen1998basispursuit} nor approximate Walsh truncation is introduced here.

The analogy with classical don't-care inputs concerns where correctness is
required. In logic minimization, outputs on excluded inputs may be chosen
to simplify an implementation; here, diagonal entries on unreachable basis
states may be chosen to simplify a phase circuit. The quantum construction
must preserve relative phases on every reachable superposition, not merely
measurement probabilities for a selected input. Proposition~\ref{prop:completion}
provides this guarantee for any feasible completion, while the resulting
cost still depends on the synthesis procedure.

For variational optimization, conditional value-at-risk (CVaR) emphasizes low-cost outcomes
\cite{barkoutsos2020cvar}, and initialization and reachability can affect
shallow-circuit performance \cite{egger2021warmstart,akshay2020reachability}.
Quantum placement and quadratic-assignment studies consider related but
distinct subproblems
\cite{turtletaub2020quantumplacement,gerlach2024fpgaplacement,khumalo2022qap}.
The small pairwise windows studied here are far simpler than the objectives
and scales addressed by classical physical-design tools
\cite{kahng2011physicaldesign,lu2015eplace,cheng2019replace,lin2021dreamplace}.

\section{Placement Encodings and Reachable Subspaces}
\label{sec:encoding}
Let $n$ named cells occupy distinct sites in a set of size $m\geq n$.
Site $s$ has coordinate $r_s=(x_s,y_s)$, and a legal assignment is an injection
$f:\{0,\ldots,n-1\}\to\{0,\ldots,m-1\}$. For weighted pairwise nets $E$,
\begin{equation}
 C(f)=\sum_{(u,v,w)\in E}w\,d(f(u),f(v)),\qquad
 d(a,b)=|x_a-x_b|+|y_a-y_b|.
 \label{eq:cost}
\end{equation}
For a two-terminal net, this distance equals half-perimeter wirelength
(HPWL). Equation~\eqref{eq:cost} is a pairwise approximation to placement
cost; it does not model general multi-terminal nets. The OpenROAD integration
study uses an anchored local-window objective (Appendix~\ref{sec:eda}).
There are $P(m,n)=m!/(m-n)!$ feasible real-cell assignments.
A circuit with $p$ mixing layers alternates mixing with phases generated by
the placement cost. We compare two encodings and their associated mixers.

\subsection{One-hot encoding with XY mixing}
The one-hot variable $x_{u,s}$ records whether cell $u$ occupies site $s$.
The quadratic unconstrained binary optimization (QUBO) objective is
\begin{align}
 H_{\mathrm{pen}}={}&\sum_{(u,v,w)\in E}\sum_{s,t} w\,d(s,t)x_{u,s}x_{v,t}
 +\lambda\sum_u\left(\sum_s x_{u,s}-1\right)^2\nonumber\\
 &+2\lambda\sum_s\sum_{u<v}x_{u,s}x_{v,s}.
 \label{eq:penalty}
\end{align}
The first penalty enforces one site per cell; the second penalizes cells
sharing a site. We use collision coefficient $2\lambda$ throughout.
Preparation sets one bit in every row. Within each row, the mixer applies
$R_{XX}(2\beta_\ell)R_{YY}(2\beta_\ell)$ in lexicographic site-pair order,
using a shared angle $\beta_\ell$. The product equals
$\exp[-i\beta_\ell(XX+YY)]$ and preserves row weight.
These operations do not preserve site occupancy: two different rows can choose the same
site. Complete per-row connectivity costs $\Theta(pnm^2)$ two-qubit rotations
before decomposition into native gates, rather than the linear count of a
nearest-neighbor XY chain.
The ring control instead uses $(0,1),(1,2),\ldots,(m-2,m-1),(m-1,0)$
in that order, once per row and layer, requiring $\Theta(pnm)$ rotations.
We refer to these baselines as complete-XY and ring-XY, or collectively
as Row-XY because mixing acts within each cell row. Both use $nm$ data
qubits and have an invariant basis of size $m^n$.
They are ordered products, not exact evolution under a summed XY Hamiltonian.

\subsection{Token permutation encoding}
Let $k=\lceil\log_2m\rceil$. Associate a $k$-qubit register with each of the
$n$ real cells and $m-n$ distinguishable vacancy tokens, denoted EMPTY. Initialize real-cell
registers to $f(u)$ and assign unoccupied sites, in increasing order, to the
EMPTY registers in increasing token order. This requires only basis-state
$X$ gates. All $m$ site labels then occur exactly once.

For an edge $(a,b)$ in a connected graph on the token registers, apply
\begin{equation}
 U_{ab}(\beta)=\exp(-i\beta S_{ab})
              =\cos\beta\,I-i\sin\beta\,S_{ab},
 \label{eq:swap}
\end{equation}
where $S_{ab}$ swaps the complete registers. An ancilla initialized to $|0\rangle$ records the
$\pm1$ eigenspace of SWAP, receives the corresponding phase, and returns
to $|0\rangle$. The ancilla is clean: it begins and ends in $|0\rangle$.
Figure~\ref{fig:layer} shows the circuit. The serial implementation reuses one;
the parallel implementation reserves
$\lfloor m/2\rfloor$ (Appendix~\ref{sec:algorithms}).
The serial width is $mk+1$;
for four cells and six sites it is 19 qubits, compared with 24 for Row-XY.
An information-theoretic encoding of the real assignments needs only
$\lceil\log_2 P(m,n)\rceil$ qubits, without providing an efficient mixer.
The serial token circuit uses fewer qubits than one-hot only when $mk+1<nm$. For example, at four cells
and sixteen sites it uses 65 qubits versus 64 for one-hot, so increasing the
number of vacancies can eliminate the width benefit.

\begin{proposition}[Injectivity and measurement semantics]
Starting from a permutation of the $m$ valid site labels, any product of the
partial swaps in Eq.~\eqref{eq:swap} and diagonal cost phases remains in the
span of valid permutations. Measurement of the real-cell registers therefore
always gives an injective placement. Every real assignment has $(m-n)!$
distinct EMPTY-token completions, and its probability is the sum of their
basis probabilities.
\end{proposition}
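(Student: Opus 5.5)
Let $V\subset(\mathbb{C}^2)^{\otimes mk}$ be the span of the basis states $|\sigma\rangle=|\sigma(0)\rangle\cdots|\sigma(m-1)\rangle$, where $\sigma$ ranges over bijections from the $m$ token registers onto the $m$ valid site labels $\{0,\ldots,m-1\}$. The proof shows that every generator maps $V$ into itself. Invariance under any product then follows by induction on the number of factors, since the initial state (prepared with $X$ gates) is a single basis vector $|\sigma_0\rangle\in V$.

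\textbf{Swaps.} For an edge $(a,b)$, $S_{ab}|\sigma\rangle=|\sigma\circ\tau_{ab}\rangle$, where $\tau_{ab}$ is the transposition of registers $a$ and $b$. Because $\sigma\circ\tau_{ab}$ is again a bijection onto the valid labels, $S_{ab}V\subseteq V$. The identity $U_{ab}(\beta)=\cos\beta\,I-i\sin\beta\,S_{ab}$ follows from $S_{ab}^2=I$, and it gives $U_{ab}(\beta)V\subseteq V$.

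It remains to check that the ancilla-based implementation realizes exactly this operator on $V\otimes|0\rangle$. The steps are:
\begin{enumerate}
\item The controlled-SWAP/Hadamard construction maps $|\psi\rangle|0\rangle$ to $(P_+|\psi\rangle)|0\rangle+(P_-|\psi\rangle)|1\rangle$, where $P_\pm=(I\pm S_{ab})/2$.
\item The phases $e^{\mp i\beta}$ are applied on the two ancilla branches.
\item Uncomputation returns $(e^{-i\beta}P_++e^{i\beta}P_-)|\psi\rangle\otimes|0\rangle=U_{ab}(\beta)|\psi\rangle\otimes|0\rangle$.
\end{enumerate}
So the ancilla is clean and the data action is exactly Eq.~\eqref{eq:swap}. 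This is the step I expect to require the most care, since it depends on the circuit details in Appendix~\ref{sec:algorithms}. I would verify it directly from the eigen-decomposition of SWAP.

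\textbf{Diagonal phases.} A diagonal cost phase multiplies each $|\sigma\rangle$ by a scalar, so it preserves $V$ whatever values it takes on invalid labels. This is also the fact later exploited by completion.

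\textbf{Measurement semantics.} Any state in $V$ has the form $\sum_\sigma c_\sigma|\sigma\rangle$ with $\sum_\sigma|c_\sigma|^2=1$. A computational-basis measurement of all registers therefore yields some bijection $\sigma$, and its restriction to the $n$ real-cell registers is injective. Measuring only the real-cell registers produces the same marginal distribution. Concretely, the probability of a real assignment $f$ is $\sum_{\sigma:\sigma|_{\mathrm{real}}=f}|c_\sigma|^2$.

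\textbf{Counting completions.} The bijections extending $f$ correspond exactly to bijections from the $m-n$ EMPTY registers onto the $m-n$ sites outside $f$'s image. Since the tokens are distinguishable, there are $(m-n)!$ of them. This gives both the count and the summed-probability statement in the proposition.
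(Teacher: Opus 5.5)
Your proposal is correct and follows the paper's proof essentially step for step: transpositions map permutation basis states to permutation basis states, so $\cos\beta\,I-i\sin\beta\,S_{ab}$ and diagonal phases preserve the span; induction gives the invariant; and the $(m-n)!$ bijections of EMPTY registers onto unused labels are orthogonal outcomes whose probabilities add. Your check that the clean-ancilla circuit realizes $e^{-i\beta}P_+ + e^{i\beta}P_-$ is not needed for the statement as posed, which concerns the operators of Eq.~\eqref{eq:swap}, but it is correct and matches the eigenspace argument the paper gives after Algorithm~\ref{alg:swap}.
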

\begin{proof}
A register transposition maps a permutation to a permutation, so its linear
combination with the identity preserves that span. A diagonal phase cannot
change its support. Induction gives the invariant. The unused site labels can
be assigned bijectively to the labeled EMPTY registers in $(m-n)!$ ways.
Those completions are orthogonal measurement outcomes, so their probabilities,
not amplitudes, sum when EMPTY labels are discarded.
\end{proof}

Connected transpositions generate all token permutations, but this statement
about reachability does not imply that a shallow ordered circuit distributes
amplitude uniformly. The reduced simulator retains all $m!$ token states;
it does not merge states that differ only in EMPTY-token labels. A token-label graph
can treat different EMPTY tokens differently, so label invariance is not
assumed. Distinguishable EMPTY tokens simplify the invariant and distance phase at
the cost of extra registers and an enlarged simulation space.

\begin{figure}[t]\centering
$\mathrm{Prepare}\ \longrightarrow\ U_{M,0}(\beta_0)
\ \longrightarrow\ [\,U_C(\gamma_1)\longrightarrow U_{M,1}(\beta_1)\,]
\ \longrightarrow\cdots$\\[5pt]
\begin{tikzpicture}[x=1cm,y=.75cm,font=\small]
\tikzset{gate/.style={draw,fill=white,minimum width=.48cm,minimum height=.42cm}}
\foreach \y in {0,-1,-2}{\draw (0,\y)--(8,\y);}
\node[left] at (0,0) {$h:|0\rangle$};
\node[right] at (8,0) {$|0\rangle$};
\node[left] at (0,-1) {$a$};\node[left] at (0,-2) {$b$};
\foreach \y in {-1,-2}{
 \draw (.35,\y-.10)--(.47,\y+.10);
 \node[above,scale=.7] at (.4,\y) {$k$};
}
\foreach \x in {1,3,5,7}{\node[gate] at (\x,0) {$H$};}
\node[gate,minimum width=1.05cm] at (4,0) {$R_Z(2\beta)$};
\foreach \x in {2,6}{
 \fill (\x,0) circle (2pt);\draw (\x,0)--(\x,-2);
 \foreach \y in {-1,-2}{
  \draw (\x-.10,\y-.13)--(\x+.10,\y+.13);
  \draw (\x-.10,\y+.13)--(\x+.10,\y-.13);
 }
 \node[below,scale=.8] at (\x,-2.15) {$k$ Fredkin gates};
}
\end{tikzpicture}
\caption{Prepared token circuit and gate-level expansion of its mixer block.
Each controlled register swap is $k$ Fredkin gates sharing the indicated
ancilla. The first cost phase is absent because preparation is a basis state.}
\Description{A prepared circuit applies a mixer followed by repeated cost-phase
and mixer layers. Each partial register swap uses a clean ancilla, four
Hadamard gates, two controlled register swaps and one Z rotation.}
\label{fig:layer}
\end{figure}

\subsection{Ordering is part of the ansatz}
Our primary token graph is the line
$[(0,1),(1,2),\ldots,(m-2,m-1)]$. The schedule selected during development applies swaps involving an EMPTY
register first, with
lexicographic edge order within each group, and uses that same list in every
layer. An edge is EMPTY-incident when at least one register index is at least
$n$. The historical routing control uses rotating order: layer $\ell$
cyclically shifts the fixed list left by $\ell\bmod(m-1)$.
The reversed schedule reverses the original fixed list in every layer.
Every edge receives the full shared layer angle $\beta_\ell$.
Neighboring register swaps generally do not commute, so schedules sharing a
transition graph need not share a shallow variational family.

\subsection{Removing redundant cost operations}
\begin{proposition}[Redundant cost operations]
For a computational-basis initialization and diagonal cost, the first cost
phase changes only global phase. For Row-XY initialized with one excitation
per row, the row penalty in Eq.~\eqref{eq:penalty} vanishes on every reachable
state and can be removed from all phases without changing ideal output
probabilities. Collision penalties cannot be removed on this basis.
\end{proposition}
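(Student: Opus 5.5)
The proof splits into the three claims of the proposition; each uses only diagonality and the invariant subspace of the mixer.

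\emph{First cost phase.} Let the preparation give a computational-basis state $|z\rangle$. A diagonal cost $H_C$ acts as $e^{-i\gamma_1H_C}|z\rangle=e^{-i\gamma_1C(z)}|z\rangle$, a scalar times the same vector. All later gates are linear, so this scalar factors out of the final state. It is therefore a global phase and does not affect any output probability.

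\emph{Row penalty.} The key step is an invariant, proved like the injectivity proposition. Each factor $R_{XX}(2\beta)R_{YY}(2\beta)=\exp[-i\beta(X_iX_j+Y_iY_j)]$ commutes with the row-weight operator $N_u=\sum_s x_{u,s}$. Indeed, $X_iX_j+Y_iY_j=2(\sigma_i^+\sigma_j^-+\sigma_i^-\sigma_j^+)$ only moves an excitation within the pair $(i,j)$. Diagonal phases also commute with every $N_u$. Induction over the gate sequence then shows that every reachable state lies in the joint eigenspace $N_u=1$ for all $u$; this is the span of the $m^n$ one-excitation-per-row basis states.

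On that span, $\lambda\sum_u(N_u-1)^2$ acts as zero. Hence the Hamiltonian $H_{\mathrm{pen}}$ and the same Hamiltonian with the row term deleted, call it $H'$, agree on the invariant subspace $V$. Because both are diagonal, they also leave $V$ invariant. So $e^{-i\gamma H_{\mathrm{pen}}}$ and $e^{-i\gamma H'}$ restrict to the same operator on $V$. By induction over layers the two circuits produce identical states, not merely identical probabilities. This covers the case where the phase comes from Eq.~\eqref{eq:penalty} expanded as a QUBO with constant offset: a constant adds only a global phase, so removing it is harmless.

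\emph{Collision penalties.} The step I expect to need the most care is showing these cannot be removed. The plan is a counterexample. Inside $V$, basis states with two cells on the same site exist, for example $x_{0,s}=x_{1,s}=1$. On such states $2\lambda\sum_s\sum_{u<v}x_{u,s}x_{v,s}$ is nonzero. Since Row-XY does not preserve site occupancy, a legal initial placement with $m\ge 2$ and $n\ge2$ reaches colliding states after one mixer layer with generic $\beta$.

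I would take the smallest instance, $n=m=2$ with $E=\emptyset$, and $p=2$. The collision term is then not constant on the reachable support, so it yields relative phases between reachable components. I would compute, or simply exhibit, that the output distribution with the term included differs from the one without it at some $(\gamma,\beta)$. Even a single differing probability suffices, since the claim is only that removal is not justified on the basis of the invariant.
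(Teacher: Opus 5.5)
Your first two parts match the paper's proof, with more detail. The prepared basis state is an eigenvector of the diagonal cost. Every XY factor and every diagonal phase preserves each row weight $N_u$, so the row penalty vanishes on the reachable span. Your added step, that the full and reduced phases restrict to the same operator on $V$, makes explicit what the paper leaves implicit. It shows the prepared states coincide, and that a dropped QUBO constant contributes only a global phase.

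The routes differ on the collision claim. The paper proves it in one line: the tensor product of the row-weight-one spaces still contains collision states, on which the collision term is nonzero. So the invariance argument gives no license to drop it. As your own last sentence concedes, that already settles ``cannot be removed on this basis,'' so your counterexample is not needed for the statement. What it adds is the stronger fact that dropping the collision term actually changes the output.

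The counterexample does go through, although you did not carry out the computation. Take complete-XY with $n=m=2$, $E=\emptyset$, $p=2$, cell~0 starting on site~0 and cell~1 on site~1. The probability that both cells end on site~0 is
\[
\tfrac14\Bigl[\sin^2\bigl(4(\beta_0+\beta_1)\bigr)-\sin 8\beta_0\,\sin 8\beta_1\,\sin^2(\lambda\gamma_1)\Bigr].
\]
At $\beta_0=\beta_1=\pi/32$ this equals $\tfrac18\cos^2(\lambda\gamma_1)$ with the collision term and $\tfrac18$ without it. These differ whenever $\lambda\gamma_1\notin\pi\mathbb{Z}$.
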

\begin{proof}
A basis state $|f_0\rangle$ is an eigenvector of diagonal $H$, hence
$e^{-i\gamma_0H}|f_0\rangle=e^{-i\gamma_0H(f_0)}|f_0\rangle$.
Every row-XY factor preserves its row's excitation number, and diagonal phases
preserve basis support. Thus $\sum_s x_{u,s}-1=0$ on the entire reachable span.
The tensor product of the row-weight-one spaces still contains collisions.
\end{proof}

Both prepared circuits consequently use $2p-1$ free parameters,
\begin{equation}
 \theta=(\beta_0,\gamma_1,\beta_1,\ldots,\gamma_{p-1},\beta_{p-1}),\quad
 |\psi(\theta)\rangle=
 U_{M,p-1}U_C(\gamma_{p-1})\cdots U_{M,1}U_C(\gamma_1)
 U_{M,0}|f_0\rangle.
 \label{eq:ansatz}
\end{equation}
These simplifications preserve the state prepared from $|f_0\rangle$;
they need not preserve the unitary on other inputs. Applying them to both
encodings ensures that the resource comparison reflects operations needed
by the algorithm.

\section{Exact Completion of the Distance Phase}
\label{sec:completion}
Real-cell registers expose both arguments of each net distance directly.
For a pair of $k$-bit registers, let $z=a+2^kb$ and
$W_{zj}=(-1)^{\operatorname{popcount}(z\mathbin{\&}j)}$ be the
$2^{2k}$-dimensional Walsh matrix, where $\operatorname{popcount}$ counts
set bits and $\&$ denotes bitwise AND. A distance extension has coefficients
$c_j$ with $\widetilde d(z)=\sum_jW_{zj}c_j$.
Here $j$ is a bit mask specifying the qubits in the Pauli-$Z$ product $Z_j$.
The corresponding diagonal phase for net weight $w$ is
$\exp[-i\gamma w\sum_j c_j Z_j]$. For each nonconstant term, a CX network computes the parity of the
selected qubits onto one of them, applies $R_Z(2\gamma wc_j)$, and
uncomputes the parity; the constant is
recorded as global phase. The phase needs no location-decoding ancilla.

A naive exact extension sets invalid-code distances to zero and transforms
that complete table. For $m<2^k$, this fixes entries the ideal token circuit
never queries. Instead define
$V=\{a+2^kb:0\leq a,b<m\}$ and solve
\begin{equation}
 \min_c\ \sum_{j>0}\left[0.01+2\max(\operatorname{popcount}(j)-1,0)\right]|c_j|
 \quad\text{subject to}\quad W_{V,:}c=d_V.
 \label{eq:lp}
\end{equation}
Splitting each coefficient into positive and negative parts yields a linear
program (LP), specifically weighted basis pursuit \cite{chen1998basispursuit}. We call the resulting extension L1 completion. The constant is unpenalized.
The weight resembles the isolated
CX count of a parity term, with a small positive weight for one-body terms.
It multiplies coefficient magnitude, however, so it is a sparsity surrogate,
not the discrete gate-count objective and not a proof of minimum CX.

\begin{proposition}[Valid-domain phase equivalence]
\label{prop:completion}
If two coefficient vectors satisfy the equalities in Eq.~\eqref{eq:lp}, their
distance phases agree on all pairs of valid site labels, for every real
$\gamma$ and net weight. Replacing every token cost phase by either extension
therefore preserves the prepared state exactly. If $m=2^k$, the equalities
uniquely fix all Walsh coefficients.
\end{proposition}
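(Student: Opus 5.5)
The proof rests on one observation: each completed distance phase is diagonal in the computational basis, and on every valid basis state it acts as the same scalar. I would establish this pointwise, lift it to the whole reachable span, and finish with a separate linear-algebra argument for the uniqueness claim when $m=2^k$.

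\emph{Agreement on valid pairs.} Take coefficient vectors $c$ and $c'$ with $W_{V,:}c=W_{V,:}c'=d_V$. On the two registers of a net, the operator $\sum_j c_j Z_j$ is diagonal. Its entry on the basis state $|a\rangle|b\rangle$, with $z=a+2^kb$, is $\sum_j W_{zj}c_j=\widetilde d(z)$, because $Z_j|z\rangle=(-1)^{\operatorname{popcount}(z\mathbin{\&}j)}|z\rangle$. For $z\in V$ the equality constraints give $\widetilde d(z)=\widetilde d'(z)=d(a,b)$. Hence, for every real $\gamma$ and weight $w$, the phases $e^{-i\gamma w\widetilde d(z)}$ and $e^{-i\gamma w\widetilde d'(z)}$ coincide on every valid pair. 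The constant term $c_0$ is included in this entry, so treating it as a global phase creates no discrepancy: both constants are fixed by the same constraints.

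\emph{Lifting to the prepared state.} A token cost phase $U_C(\gamma)$ is a product, over nets $(u,v,w)$, of such diagonal factors acting on the real-cell registers of $u$ and $v$. Consider a computational-basis state in the span of valid permutations. Every real-cell register holds a label below $m$, so each factor acts on it by the scalar $e^{-i\gamma w\,d(f(u),f(v))}$. The two completed operators therefore agree when restricted to that span. By the injectivity proposition, the initial state and all intermediate states of Eq.~\eqref{eq:ansatz} lie in this span, since the mixers and cost phases preserve it. An induction over the layers then shows the two circuits produce identical states after each layer, and hence the same $|\psi(\theta)\rangle$ for every $\theta$.

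\emph{Uniqueness when $m=2^k$.} In this case $V$ is the full index set $\{0,\dots,2^{2k}-1\}$, so $W_{V,:}=W$ is the full Sylvester--Hadamard matrix. It satisfies $W^2=2^{2k}I$, so it is invertible. The constraints then force $c=2^{-2k}Wd$, which is unique.

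The only step needing care is the lifting. The agreement must hold on the whole reachable span, not just on its basis states, so that relative phases between valid states are preserved rather than only measurement probabilities. This follows from the operators being diagonal and agreeing entrywise on the support.
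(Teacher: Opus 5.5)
Your proof is correct and follows the paper's own argument: the diagonal entries $\sum_j W_{zj}c_j$ equal $d(a,b)$ on $V$, the token invariant plus induction through the alternating product lifts this to the prepared state, and at $m=2^k$ the full Walsh matrix is invertible. One aside needs correcting: for $m<2^k$ the constraints do not fix the constant $c_0=2^{-2k}\sum_z\widetilde d(z)$ on its own, since it depends on the free entries, so if constants are dropped as global phase the two circuits agree on the valid span only up to an unobservable global phase --- this is harmless, but not because the constants coincide.
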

\begin{proof}
Both diagonal eigenvalues equal $d(a,b)$ at every valid pair, so exponentiation
preserves their equality. The token invariant keeps every layer inside that
domain; induction through the alternating product proves the prepared-state
claim. At full binary capacity, $V$ is the complete index set and the Walsh
matrix is invertible.
\end{proof}

For computed coefficients with maximum valid-pair residual $\epsilon$,
the induced cost error on any legal assignment is at most
$\epsilon\sum_{(u,v,w)\in E}|w|$. The corresponding restricted phase-unitary
error in operator norm is at most $|\gamma|\epsilon\sum_E|w|$, by
$|e^{ia}-e^{ib}|\leq|a-b|$. Thus numerical phase error scales with angle
and total net weight. We report the underlying distance residuals.

We deliberately constrain equal-site pairs as well as distinct pairs. They
are unnecessary for injective token states but strengthen the distance
specification. The
freedom resides only in invalid binary labels. It preserves ideal action at
layer boundaries; it does not guarantee identical behavior under physical
noise during the two different decompositions.

The dense linear program has $2^{2k}=O(m^2)$ coefficients, $m^2$ equalities,
and $O(m^4)$ explicitly stored constraint entries. It is practical at the
small sizes studied here, not a scalable oracle-construction theorem. A
generic two-register phase has $O(m^2)$ Walsh terms. Independent parity
computation costs $O(m^2\log m)$ CX per net; a full reflected-Gray traversal
costs $O(m^2)$ CX per net before inter-net aggregation. Coordinate structure and term sharing can reduce
this count substantially. Conversely, the token mixer on a line or ring
requires $O(pm\log m)$ register-swap work, so the phase can dominate an
otherwise compact mixer. Algorithm~\ref{alg:swap} includes the full
eigenspace-resolution cost in that mixer bound.

\subsection{Structured completion for coordinate-indexed arrays}
\label{sec:structured}
Arbitrary site-label permutations need not preserve geometric sparsity. A
special case does admit direct construction: unsigned coordinate bits in
row-major order, including a valid prefix of a binary-capacity rectangle.
\begin{lemma}[Exact coordinate distance and Walsh support]
\label{lem:structured}
For $\ell$-bit unsigned integers $a,b$, let $a',b'$ be their lower $\ell-1$
bits and $A=(-1)^{a_{\ell-1}}, B=(-1)^{b_{\ell-1}}$. Their distance obeys
\begin{equation}
 |a-b|=\frac{1+AB}{2}|a'-b'|
 +2^{\ell-2}(1-AB)+\frac{B-A}{2}(a'-b'),
 \qquad D_0=0,
 \label{eq:structured}
\end{equation}
where $D_0$ is the distance between two zero-bit integers. For $\ell\ge1$,
the Walsh expansion has exactly
\begin{equation}
 T_\ell=5\,2^\ell-4\ell-4
 \label{eq:structured-support}
\end{equation}
nonzero coefficients, including the constant.
\end{lemma}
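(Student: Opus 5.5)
The plan is to prove Eq.~\eqref{eq:structured} by a case split on the top bits, then read off the Walsh expansion from the recurrence and count its support by induction on $\ell$.

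\emph{The identity.} Write $a=2^{\ell-1}a_{\ell-1}+a'$ and similarly for $b$. There are three cases.
\begin{itemize}
\item If $a_{\ell-1}=b_{\ell-1}$, then $AB=1$ and $B-A=0$. The right side reduces to $|a'-b'|=|a-b|$.
\item If $a_{\ell-1}=0$ and $b_{\ell-1}=1$, then $A=1$, $B=-1$, and $b-a=2^{\ell-1}+b'-a'>0$ because $|a'-b'|<2^{\ell-1}$. The right side is $2^{\ell-2}\cdot 2-(a'-b')$, which is the same quantity.
\item The case $a_{\ell-1}=1$, $b_{\ell-1}=0$ follows by the symmetry $a\leftrightarrow b$, $A\leftrightarrow B$.
\end{itemize}
The base value $D_0=0$ is trivial.

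\emph{Walsh expansion.} We identify $A=Z_{a,\ell-1}$ and $B=Z_{b,\ell-1}$. Let $F_{\ell-1}$ denote the Walsh expansion of $|a'-b'|$ on the lower $2(\ell-1)$ qubits, with constant term $c_0$. Using $a'=\sum_{i<\ell-1}2^i(1-Z_{a,i})/2$, we get
\[
G:=a'-b'=\sum_{i<\ell-1}2^{i-1}(Z_{b,i}-Z_{a,i}).
\]
This expression has exactly $2(\ell-1)$ nonzero terms and no constant. Substituting into Eq.~\eqref{eq:structured} gives
\[
D_\ell=\tfrac12F_{\ell-1}+\tfrac12 AB\,F_{\ell-1}+2^{\ell-2}(1-AB)+\tfrac12(B-A)G .
\]
We sort the monomials by how many top-bit factors they contain.
\begin{itemize}
\item Terms from $\tfrac12F_{\ell-1}$ and the constant contain neither $A$ nor $B$.
\item Terms from $\tfrac12AB\,F_{\ell-1}$ and $-2^{\ell-2}AB$ contain both.
\item Terms from $(B-A)G$ contain exactly one.
\end{itemize}
These three families are disjoint, so cancellation can occur only inside a family. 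Within the first two families the only possible merging is at the constant: the constant of $D_\ell$ is $c_0/2+2^{\ell-2}$, and the coefficient of $AB$ is $c_0/2-2^{\ell-2}$. The $4(\ell-1)$ monomials $BZ_{\cdot,i}$ and $AZ_{\cdot,i}$ have coefficients $\pm2^{i-2}\neq0$ and are pairwise distinct.

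\emph{Counting.} For $\ell=1$ we have directly $|a-b|=(1-AB)/2$, so $T_1=2=5\cdot2-4-4$. For $\ell\ge2$, the recursion gives $T_\ell=2T_{\ell-1}+4(\ell-1)$, provided both merged coefficients are nonzero. Substituting the inductive hypothesis yields
\[
T_\ell=2\bigl(5\cdot2^{\ell-1}-4\ell\bigr)+4\ell-4=5\cdot2^\ell-4\ell-4,
\]
which is Eq.~\eqref{eq:structured-support}.

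\emph{Main obstacle.} The delicate step is showing that the merged coefficients $c_0/2\pm2^{\ell-2}$ do not vanish. Here $c_0$ is the uniform average of $|a'-b'|$ over $N=2^{\ell-1}$ values, namely $(N^2-1)/(3N)$. This lies strictly between $0$ and $N=2^{\ell-1}$, so $c_0/2-2^{\ell-2}<0$ and $c_0/2+2^{\ell-2}>0$. Consequently the constant remains in the support at every level, and no accidental cancellation occurs elsewhere.
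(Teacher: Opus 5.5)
Your proof is correct and follows essentially the same route as the paper: you split on the top bits to get the identity, then observe that each old mask appears both alone and multiplied by $AB$, that the $4(\ell-1)$ new high--low masks are distinct, and that only the constant and $AB$ coefficients merge, which yields $T_\ell=2T_{\ell-1}+4(\ell-1)$. The one small difference is that your nonvanishing check uses only the bound $0<c_0<2^{\ell-1}$, which is slightly lighter than the paper's explicit value $-(2N^2+1)/(6N)$ for the new $AB$ coefficient.
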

\begin{proof}
If the high bits agree, the first term of Eq.~\eqref{eq:structured} remains
and the other two vanish. If they differ, their ordering fixes the sign of
$a-b$ and gives the displayed low-bit correction. This proves the distance
identity. Expand
$a'-b'=\sum_{i<\ell-1}2^{i-1}(Z_{b_i}-Z_{a_i})$ and accumulate identical
masks. The base case is $T_1=2$. For $\ell\ge2$, each old mask appears both
alone and multiplied by $AB$; the $4(\ell-1)$ new high--low two-bit masks
are distinct from both sets. Only the constant and $AB$ coefficients combine
with the middle term of Eq.~\eqref{eq:structured}. The new constant is the
positive mean distance. For $N=2^{\ell-1}$, the previous constant is
$(N^2-1)/(3N)$, so the new $AB$ coefficient is $-(2N^2+1)/(6N)$ and is also
nonzero. Hence $T_\ell=2T_{\ell-1}+4(\ell-1)$, which gives
Eq.~\eqref{eq:structured-support} by induction.
\end{proof}
This constructs a sparse list of coefficients without a distance table or LP.
Construction uses $O(2^\ell)$ arithmetic operations and entries.
Exact coefficients (integers divided by powers of two) and masks require $O(\ell)$ bits per entry, adding
an $O(\ell)$ factor when accounting for bit operations and storage.

For coordinate widths $k_x+k_y=k$, combine the two coefficient lists on their separate
bit supports. The bound is $O(2^{k_x}+2^{k_y})$ terms and
$O(k(2^{k_x}+2^{k_y}))$ independent-parity CX per net, or $O(k\sqrt m)$ for
balanced binary rectangles. This is a subclass guarantee, not an arithmetic
oracle for arbitrary shuffled coordinates. The extension preserves all
valid prefix distances; unused grid points are virtual, not newly legal sites.
A change in the number of physical sites changes the optimization problem,
whereas extending the binary truth table does not.

\subsection{From coefficients to circuits}
Each nonconstant Walsh term requires a rotation conditioned on a parity.
Computing and uncomputing every parity independently can repeat many CX
gates. Our primary synthesis method first combines terms from different
nets that act on the same global qubit support. It groups the remaining
terms by their highest-index qubit, computes parity onto that qubit, and
visits the control masks in reflected Gray-code order. Between rotations,
only controls that change are updated; each target is restored at the end
of its group. We refer to this method as Gray synthesis.

Two additional ordering methods provide comparisons. Fixed mask order
processes nets and their integer-valued masks in increasing order. Beam
search instead retains four candidate term sequences and extends each with
up to ten terms chosen by support overlap, ranking sequences by CX count
after adjacent cancellation. These methods retain separate terms for each
net. Their comparison with Gray synthesis therefore includes both term
aggregation and ordering. Algorithm~\ref{alg:phase} specifies the ordering
and tie rules. All three methods retain nonzero coefficients after the
numerical threshold; intentional approximation is evaluated separately.

\subsection{Alternative extensions and approximation}
\label{sec:extensions}
We compare five extensions while holding coordinates and site labels fixed. Besides zero and weighted-$\ell_1$, a virtual-coordinate extension
appends unused integer lattice points in row-major order inside the bounding
rectangle, extending rows above it as necessary. Manhattan distance then
specifies every binary-code pair. A nearest-code extension maps each invalid code
to a valid code of least Hamming distance, breaking ties by smaller code, and
uses the mapped pair's distance. Mean fill assigns the mean of all $m^2$ valid
ordered-pair distances whenever either label is invalid. All five extensions
retain every valid pair, including equal-site pairs. Virtual geometry does
not imply sparse coefficients when the original labels are shuffled, and zero
padding is not proved to maximize support or gate count.

For L1 and virtual completion, the approximate control tests coefficients in
increasing absolute magnitude, breaking ties by mask. It removes a term only
when the maximum valid-pair residual remains at most $\eta D$, where $D$ is
the valid-site diameter and $\eta\in\{.001,.01,.05\}$. We report the achieved
residual and the resulting assignment-error bound, not just the allowed
tolerance. This greedy truncation is not an optimal sparse approximation.
It measures circuit size against bounded objective error without retraining
the ansatz.

Symbolic beam and Gray circuits retain a free $\gamma$. Numeric comparisons
fix $\gamma=.371$ and compare Gray, Qiskit's generic diagonal synthesis and
the ZX-calculus optimizer PyZX 0.10.3, using full reduction followed by
circuit extraction. Each extracted
circuit must pass PyZX's equality check against its parsed input; independent
small Qiskit unitary checks test the conversion and all five extension policies.
OpenQASM angle conversion is checked separately. Fixing angles can enable
additional simplifications, so numerical and symbolic results are reported
separately.

\section{Experimental Design}
\label{sec:design}
The evaluation asks three questions: how completion changes phase synthesis,
how those savings affect full-circuit resources, and what the motivating
encoding comparison implies for placement quality. Phase circuits isolate
the first question. Resource comparisons include preparation, mixing and
cost phases. Quality comparisons use ideal simulation and do not measure
device performance. A separate OpenROAD study checks whether selected
local placements can be used in a physical-design flow
(Section~\ref{sec:eda-results}).

\subsection{Geometries and placement instances}
The phase study uses two cohorts, both with four cells and six weighted nets.
The initial cohort contains 100 cases: twenty at each of
$m=6,8,9,12,16$, spanning four geometry families and five coordinate/label
realizations. The replication cohort contains twelve cases at each size,
using compact, irregular-line, L-shaped and scatter geometries, shuffled
site labels and connected weighted nets. Its cases were fixed before their
outcomes were evaluated. Together, the cohorts contain 96 cases with unused
binary codes and 64 power-of-two controls, for which no completion freedom
exists. Later comparisons of extensions and synthesis methods reuse these
same 160 cases; they are robustness tests, not additional replications.

The placement study distinguishes initial confirmation from transfer to a
changed generator. Confirmation comprises 36 four-cell/six-site and twelve
five-cell/seven-site instances, drawn from sparse and dense grid, line and
L-shaped families. One four-cell instance had been used for a smoke test;
we report both the full cohort and its exclusion sensitivity. Transfer uses
twelve instances at each of four cells/six sites, five cells/eight sites
and six cells/eight sites, with compact, irregular-line and scatter
geometries. The two cohorts differ in geometry and sampling and are not paired
with each other. Methods are paired within an instance.

The primary token schedule and the initial complete-XY penalty
$\lambda=5$ were selected during development. Subsequent controls were
specified after those earlier results were known, but before their own
execution. They assess robustness to baseline and implementation choices;
they do not provide external preregistration or an independent application
sample. Appendix~\ref{sec:provenance} records the sequence and source freezes.

\subsection{Circuit synthesis and resource measurement}
\label{sec:compile}
All full circuits include basis preparation, omit the redundant operations
identified in Section~\ref{sec:encoding}, and retain $2p-1$ free symbolic
parameters. Measurements are excluded from gate counts. Logical circuits
use the \texttt{rz,sx,x,cx} basis with Qiskit optimization level 3 and seed
123. CX denotes a controlled-NOT gate. Routing uses Qiskit 2.5.2 with SABRE~\cite{li2019sabre}
layout and routing on three targets: an archived 127-qubit heavy-hex
\texttt{FakeSherbrooke} target, a 25-vertex path (\texttt{line25}) and a
$5\times5$ grid (\texttt{grid25}). The native two-qubit gate is echoed
cross-resonance (ECR) on Sherbrooke and CX on the synthetic graphs. Counts
are compared within each target and are not pooled across gate bases.

The main full-circuit comparison uses the twelve six-site replication phase
cases, $p=3$, L1 completion and Gray synthesis. It includes serial and
parallel token circuits and both Row-XY baselines, with five routing seeds
per target. Each circuit is identical across targets before routing.
For each method and instance we first take the median across seeds, then
compute paired percentage changes and summarize their median and
interquartile range (IQR). Routing seeds do not increase the number of
independent instances. Earlier compiler and synthesis configurations are
reported separately in Appendix~\ref{sec:provenance}.

For parallel mixing, disjoint register swaps are assigned to the earliest
stage consistent with all shared-register dependencies. This preserves the
ordered unitary. At six sites, the five swaps occupy three stages and use
at most two ancillas simultaneously, although three are reserved. Routing
includes the reserved wire: the serial and parallel token circuits have
19 and 21 qubits, respectively, compared with 24 for Row-XY. Statevector
checks include arbitrary superpositions and verify that every ancilla
returns to zero.

\subsection{Training and penalty selection}
\label{sec:training}
Unless depth is varied explicitly, placement circuits use $p=3$ and five
free parameters. Training minimizes conditional value-at-risk (CVaR), the
mean of the lowest-cost quarter of the output distribution, using the derivative-free optimizer COBYLA
with a 200-objective-call cap, initial radius 0.2 and tolerance 0.001.
Every method uses the same three legal starting placements: deterministic,
random and poor. The poor start is an exact-reference worst placement and
serves as a stress test. Confirmation uses four parameter seeds; transfer
uses two. Four-cell transfer also includes training with 2,048 ideal
samples per call, averaging the lowest 512 energies. Its final exact
probabilities are evaluation diagnostics and are not fed back to training.

For each Row graph, eighteen disjoint development instances are used to
choose one of four penalties: $\lambda=5$, $B/4$, $1.000001B$ or $4B$.
Selection maximizes mean optimum probability within each problem size.
Here $B=Dd_{\max}/2$, $D$ is the
site-set diameter, and $d_{\max}$ is the maximum weighted net degree.
For nonnegative net weights and $m\ge n$, $\lambda>B$ makes every global
energy minimizer in the one-hot subspace legal: moving a colliding cell to
an empty site increases wirelength by at most $Dd_{\max}$ while removing
at least $2\lambda$ in penalties. This bound concerns exact energy
minimization, not the feasibility of shallow-QAOA samples. Complete and
ring penalties are selected independently before transfer evaluation;
fixed-$\lambda$ results are retained alongside calibrated results.

To compare quality at similar circuit cost, we fix serial token at $p=3$
and route each Row baseline at $p=1,\ldots,12$. For each target and metric
(native two-qubit count or depth), we select the largest Row depth whose
cohort median does not exceed token's. Selection uses resource counts
alone, with the same twelve transfer instances, five routing seeds and
deterministic preparation. It matches cohort medians, not every instance
or training start. Selected circuits receive the same 200-call training
cap. Consequently, neither per-call work nor total training cost is equalized.

\subsection{Quality metrics and statistical analysis}
The primary placement metric is the probability of measuring an optimal
legal assignment, called optimum probability below. We also report legal
probability, optimum probability conditional on legality, and expected
best cost from 128 or 512 final measurements. The conditional probability
is computed within each run before averaging. Best-of-sample selection
retains the known starting placement as a fallback and charges illegal
measurements against the sampling budget.

We average starts and optimizer seeds within each instance before pairing
methods. We report 95\% percentile confidence intervals from 20,000
instance-bootstrap draws stratified by geometry family. The small families
and restricted generators limit population-level interpretation.
Additional endpoints receive descriptive exact sign and Wilcoxon
signed-rank tests, with Holm correction across nine existing conditions or
twelve resource-budget comparisons, separately for each endpoint and test.
The Wilcoxon test assumes symmetric paired differences; bootstrap intervals
are not adjusted for multiplicity. Actual optimizer calls, seeds, shot counts
and numerical conventions are retained in the artifact and summarized in
Appendix~\ref{sec:provenance}.

The quality analyses contain 18 Holm-corrected families: five endpoints
across nine original conditions and four endpoints across twelve
resource-budget comparisons, each evaluated with two tests. These families
contain 186 individual test results. Correction is kept within each
endpoint and test to address distinct questions about legality, solution
quality and sampling cost; it does not control the family-wise error rate
across the whole paper. The corrected values and unadjusted intervals are
therefore evidence about the specified comparisons, rather than a joint
claim of superiority across methods and endpoints. Phase and resource
counts are descriptive paired comparisons, not additional significance tests.

\section{Completion and Synthesis Results}
\label{sec:phase-results}\label{sec:gap-study}
We first evaluate the distance phase independently of the mixer. This
isolates the effect of choosing unspecified table entries, then asks how
that effect depends on geometry, label structure and synthesis.

\subsection{Completion gains and synthesis dependence}
We find that completion improves on all four inexpensive extensions under
the same Gray synthesis method (Table~\ref{tab:extensions}). L1 uses fewer CX gates
than each alternative in all 96 cases with unused binary codes. All 64
power-of-two cases tie within every synthesis method, as required by the
uniqueness of their Walsh expansion. On the replication cohort, median
reductions relative to virtual-coordinate extension are 28.0\%, 53.9\% and
21.6\% at six, nine and twelve sites. The corresponding reductions relative
to nearest-code extension are 25.2\%, 53.0\% and 21.5\%.
Thus, on these geometries and label assignments, optimizing the extension
improves on natural alternatives to zero padding.

\begin{table}[t]\centering\small\setlength{\tabcolsep}{3pt}
\caption{Exact extensions under symbolic Gray synthesis. The initial
100-case and replication 60-case cohorts contain 20 and 12 geometries per
site count, respectively. CX counts are medians with interquartile ranges.
Percentage changes are medians of paired L1/reference changes, not ratios
of the displayed medians; negative values favor L1. Power-of-two controls
are omitted because all extensions coincide.}
\label{tab:extensions}\begin{tabular}{llrrrrr}
\toprule
Cohort & $m$ & L1 CX [IQR] & vs zero & vs virtual & vs nearest & vs mean\\
\midrule
100-case & 6 & 238 [236, 262] & -23.8\% & -26.3\% & -16.7\% & -23.8\%\\
100-case & 9 & 655 [619, 673] & -55.0\% & -55.0\% & -52.5\% & -55.0\%\\
100-case & 12 & 1095 [991, 1164] & -24.7\% & -23.0\% & -23.7\% & -24.7\%\\
60-case & 6 & 228 [204, 235] & -27.6\% & -28.0\% & -25.2\% & -27.6\%\\
60-case & 9 & 552 [438, 630] & -53.9\% & -53.9\% & -53.0\% & -53.9\%\\
60-case & 12 & 939 [858, 966] & -21.3\% & -21.6\% & -21.5\% & -21.3\%\\
\bottomrule
\end{tabular}

\end{table}

The benefit is sensitive to synthesis. With the angle fixed at
$\gamma=0.371$, generic diagonal synthesis reduces the replication-cohort
L1/virtual savings to 10.9\%, 1.3\% and 0.7\%. At these three sizes, L1
uses fewer/equal/more CX gates in 11/0/1, 8/4/0 and 11/1/0 cases,
respectively. L1 also improves on virtual extension throughout the
replication cases with unused codes under the tested PyZX procedure, but
that procedure produces more CX gates than numerical Gray synthesis for
every input (Appendix~\ref{sec:synthesis-diagnostics}). The large Gray
savings therefore describe a particular combination of completion and
synthesis, rather than a compiler-independent reduction.

We verify that these gate savings preserve the required action by
reconstructing all 1,760 stored coefficient vectors. The maximum
valid-distance residual is $5.69\times10^{-14}$ for the exact extensions. Replaying all 864 saved
confirmation and transfer parameter vectors with L1-completed token
energies gives the same energies and output probabilities at stored
floating-point precision. These checks support
Proposition~\ref{prop:completion}: the completion changes implementation
cost while preserving the ideal algorithm. Numerical angle conversion
and deliberate approximation have separate error checks in
Appendix~\ref{sec:synthesis-diagnostics}.

\subsection{The extent of completion freedom}
A two-register distance table has $2^{2k}$ entries but only $m^2$ prescribed
values. Its fraction of unspecified entries is therefore
\begin{equation}
 F=1-\frac{m^2}{2^{2\lceil\log_2m\rceil}}.
 \label{eq:free-fraction}
\end{equation}
This fraction concerns unused binary codes, not the number of vacant
placement sites. At a power of two, $F=0$ and completion cannot change the
operator. Between powers of two, freedom becomes available but does not
by itself determine the saving: six and twelve sites both have $F=7/16$,
yet their CX reductions differ (Fig.~\ref{fig:free-fraction}). Nine sites
have $F=175/256$. Geometry and label assignment determine the required
values, while net aggregation and parity synthesis determine how a chosen
extension translates into gates.

\begin{figure}[t]\centering
\includegraphics[width=.97\textwidth]{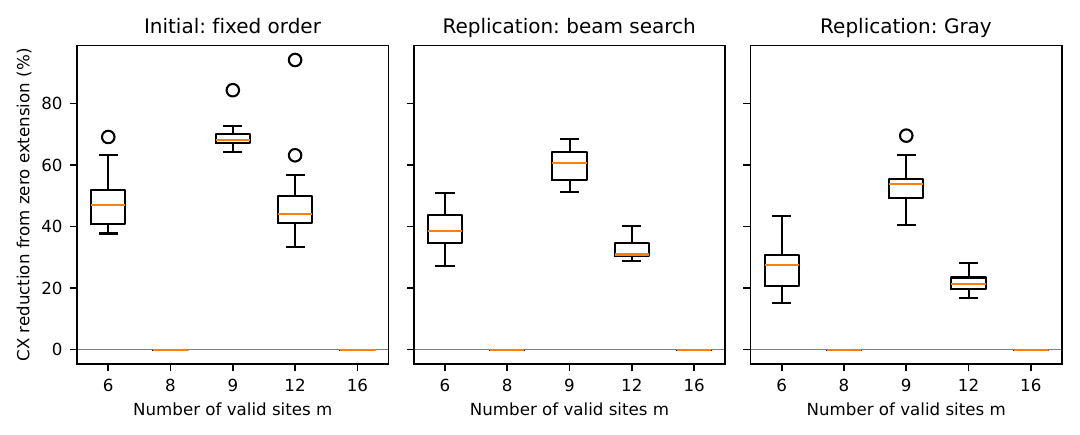}
\caption{CX reductions from L1 completion relative to zero extension.
The initial cohort uses fixed mask order (20 cases per size); the
replication cohort uses beam search and Gray synthesis (12 per size).
Boxes show medians and quartiles; whiskers reach the most extreme values
within 1.5 IQR, with other observations shown individually. Powers of two have no
completion freedom and hence zero reduction.}
\Description{Three panels show paired CX reductions by site count for fixed
mask order, beam search and Gray synthesis, with power-of-two controls at zero.}
\label{fig:free-fraction}
\end{figure}

\subsection{Joint optimization of coordinate contributions}
Manhattan distance is additive in its coordinates, suggesting a simpler
alternative to joint completion: solve Eq.~\eqref{eq:lp} separately for
$|x_a-x_b|$ and $|y_a-y_b|$, then add the coefficient vectors. The sum
$c_x+c_y$ is feasible for the joint problem. Consequently, the joint
optimum cannot have a larger weighted norm, although its synthesized
circuit need not have fewer gates. Both formulations retain the same
$O(m^4)$ dense constraint storage because both use the full site-label
domain. Coordinate separation alone does not solve the storage problem.

\begin{table}[t]\centering\small
\caption{Coordinate-separated relative to joint L1 completion. Each row
contains 20 initial-cohort or 12 replication-cohort cases. W/T/L counts
cases with fewer/equal/more CX gates for the separated construction.
Changes and IQRs summarize paired percentages. Norm comparisons in the
text evaluate the LP objective on $c_x+c_y$. Power-of-two ties are omitted.}
\label{tab:coordinate}\begin{tabular}{llrrrl}
\toprule
Cohort & Synthesis & $m$ & CX change & IQR & W/T/L\\
\midrule
100-case & fixed order & 6 & +6.6\% & [+0.0, +26.9] & 0/9/11\\
100-case & fixed order & 9 & +12.1\% & [+0.0, +27.7] & 1/6/13\\
100-case & fixed order & 12 & +21.6\% & [+0.0, +36.1] & 0/6/14\\
60-case & beam search & 6 & +11.6\% & [+0.0, +22.1] & 1/3/8\\
60-case & beam search & 9 & +17.0\% & [+6.6, +27.7] & 0/3/9\\
60-case & beam search & 12 & +17.1\% & [+7.2, +25.7] & 0/3/9\\
60-case & Gray & 6 & +10.0\% & [+0.0, +19.0] & 1/3/8\\
60-case & Gray & 9 & +16.3\% & [+2.1, +28.9] & 0/3/9\\
60-case & Gray & 12 & +14.6\% & [+4.9, +18.8] & 0/3/9\\
\bottomrule
\end{tabular}

\end{table}

Joint completion has a strictly smaller weighted norm in 67 of the 96
cases with unused codes. Coordinate-separated completion increases
replication-cohort Gray CX by median 10.0\%, 16.3\% and 14.6\% at six,
nine and twelve sites (Table~\ref{tab:coordinate}). Some individual cases
nevertheless favor separation in gate count, illustrating the distinction
between the LP objective and the synthesized circuit. All 64 power-of-two
cases yield identical coefficients, and the maximum valid-distance
residual is $8.35\times10^{-14}$. Joint optimization can exploit
cancellation between coordinate contributions that separate solves miss.

\subsection{Direct construction on regular arrays}
The recurrence in Section~\ref{sec:structured} takes a different approach:
it uses coordinate bits directly and avoids the dense LP. We evaluate it
on binary rows and row-major rectangles, including prefixes with unused
codes, at the same five site counts. Each comparison uses a four-cell,
six-net phase and Gray synthesis (Table~\ref{tab:structured}). These deterministic geometries examine
the construction's mechanism; they are separate from the shuffled-label
cohorts.

\begin{table}[t]\centering\small
\caption{Gray CX counts for regular-coordinate distance phases with six
nets. Each row is one deterministic geometry. All three extensions are
exact on valid pairs; the recurrence assigns distances to unused binary
coordinates. These are construction examples rather than a sampled cohort.}
\label{tab:structured}\begin{tabular}{lrrrr}
\toprule
Coordinate set & $m$ & Zero CX & L1 CX & Structured CX\\
\midrule
row & 6 & 292 & 238 & 264\\
row & 8 & 264 & 264 & 264\\
row & 9 & 1406 & 344 & 720\\
row & 12 & 1286 & 706 & 720\\
row & 16 & 720 & 720 & 720\\
grid & 6 & 318 & 86 & 86\\
grid & 8 & 86 & 86 & 86\\
grid & 9 & 1262 & 110 & 148\\
grid & 12 & 1182 & 110 & 148\\
grid & 16 & 148 & 148 & 148\\
\bottomrule
\end{tabular}

\end{table}

The recurrence reproduces every coordinate-pair distance through six bits,
and its support count agrees with Eq.~\eqref{eq:structured-support} through
sixteen bits. The latter check constructs coefficients rather than a
placement circuit at that width. At six grid sites, the recurrence and L1
both use 86 CX, compared with 318 for zero extension. The recurrence is not
always as economical as the LP: at nine grid sites it uses 148 CX versus
110, and at nine row sites 720 versus 344. It therefore provides a more
compact construction on this structured subclass without necessarily
minimizing the resulting circuit.

\section{Full-Circuit Resources}
\label{sec:resources}
Phase savings are only one part of the encoding comparison. Token circuits
use fewer data qubits, but their swaps and parity networks introduce depth
that one-hot circuits may avoid. We compare complete circuits on the same
twelve six-site phase instances with five SABRE routing seeds per
target~\cite{li2019sabre}. Every
token circuit uses L1 completion and Gray synthesis; all methods use $p=3$
and symbolic angles.

\begin{table}[t]\centering\small\setlength{\tabcolsep}{3pt}
\caption{Full-circuit resource comparison: medians and
interquartile ranges of paired percentage changes after five-seed instance
medians ($N=12$). All methods use $p=3$; token uses L1 completion
and Gray synthesis. Circuits are identical across targets before routing.
Positive changes are more costly. The native two-qubit gate is ECR on
Sherbrooke and CX on the synthetic line25/grid25 targets.}
\label{tab:workspace}\begin{tabular}{lllrr}
\toprule
Target & Token & Reference & 2q (\%) [IQR] & Depth (\%) [IQR]\\
\midrule
Sherbrooke & parallel & serial & -0.8 [-1.7, +0.4] & -20.9 [-21.4, -19.3]\\
Sherbrooke & parallel & complete & +1.0 [-3.9, +3.3] & +81.1 [+64.9, +103.2]\\
Sherbrooke & parallel & ring & +22.4 [+17.7, +28.1] & +113.6 [+94.6, +146.7]\\
Path & parallel & serial & -0.3 [-0.5, +0.1] & -24.3 [-24.7, -22.9]\\
Path & parallel & complete & +24.2 [+16.8, +28.0] & +79.6 [+63.9, +94.6]\\
Path & parallel & ring & +43.2 [+36.4, +50.0] & +109.4 [+92.1, +125.9]\\
Grid & parallel & serial & -0.1 [-0.4, +0.3] & -19.9 [-20.6, -17.8]\\
Grid & parallel & complete & -11.6 [-16.2, -10.7] & +15.4 [+5.2, +23.5]\\
Grid & parallel & ring & +13.3 [+9.2, +15.5] & +44.1 [+39.9, +53.5]\\
\bottomrule
\end{tabular}

\end{table}

Additional ancillas reduce token depth by allowing disjoint swaps to run
concurrently (Table~\ref{tab:workspace}). Relative to the serial implementation, the parallel circuit
reduces median depth by 20.9\% on Sherbrooke, 24.3\% on the path and 19.9\%
on the grid, with changes below 1\% in median native two-qubit count.
The Sherbrooke depth reduction has a paired IQR of 19.3--21.4\%.
The remaining comparison depends strongly on the one-hot mixer. On
Sherbrooke, parallel token is close to complete-XY in ECR count
($+1.0\%$), but has 81.1\% more depth. Against ring-XY, it has 22.4\%
more ECR gates and 113.6\% more depth. On the grid, token uses 11.6\%
fewer CX gates than complete-XY but 13.3\% more than ring-XY; its depth
is greater than both. Reducing serialization helps, but does not establish
an overall resource advantage.

\begin{table}[t]\centering\small\setlength{\tabcolsep}{3pt}
\caption{Standalone component accounting on the same 12 cases: median logical
counts in the \texttt{rz,sx,x,cx} basis. Preparation initializes the deterministic
placement; mixer is the first ordered layer; phase is one distance/penalty layer.
Data and Anc. are reserved logical widths, including qubits idle in a
standalone block; ancillas are clean workspace. $R_Z$ is a subset of 1q gates.
Standalone counts are not additive estimates of an optimized full circuit;
cross-component cancellation and routing change the total. Routed component
medians and IQRs for all three targets are in the artifact.}
\label{tab:components}\begin{tabular}{llrrrrrr}
\toprule
Method & Component & Data & Anc. & 1q & $R_Z$ & CX & Depth\\
\midrule
serial & preparation & 18 & 1 & 7 & 0 & 0 & 1\\
serial & mixer & 18 & 1 & 384 & 282 & 195 & 276\\
serial & phase & 18 & 1 & 128 & 128 & 228 & 302\\
parallel & preparation & 18 & 3 & 7 & 0 & 0 & 1\\
parallel & mixer & 18 & 3 & 389 & 285 & 195 & 170\\
parallel & phase & 18 & 3 & 128 & 128 & 228 & 302\\
complete & preparation & 24 & 0 & 4 & 0 & 0 & 1\\
complete & mixer & 24 & 0 & 1080 & 636 & 240 & 147\\
complete & phase & 24 & 0 & 210 & 210 & 372 & 106\\
ring & preparation & 24 & 0 & 4 & 0 & 0 & 1\\
ring & mixer & 24 & 0 & 540 & 312 & 96 & 106\\
ring & phase & 24 & 0 & 210 & 210 & 372 & 106\\
\bottomrule
\end{tabular}

\end{table}

Table~\ref{tab:components} explains why gate and depth rankings differ. The
completed token phase has median 228 logical CX gates versus 372 for either
Row phase, yet its depth is 302 versus 106. Parallel workspace lowers the
token mixer depth from 276 to 170 while leaving phase depth unchanged.
The complete and ring mixers use 240 and 96 CX gates, respectively,
compared with token's 195. A parity construction can therefore save gates
while exposing less parallelism. These standalone blocks locate the costs;
their depths cannot be added to predict an optimized routed circuit because
cancellation and scheduling also occur across block boundaries.

\section{Placement Quality}
\label{sec:quality}
Completion preserves token's ideal output distribution, so its circuit
savings do not require a new quality comparison. The experiments here
instead assess the token encoding relative to the one-hot baselines that
motivate it. They show a favorable four-cell result under some conditions,
but no consistent advantage across baselines, sizes and sampling metrics.

\subsection{Legality and solution quality}
At matched $p=3$, calibrated four-cell transfer with exact-distribution
training gives mean optimum probabilities of 0.05713 for token, 0.01304
for complete-XY and 0.01416 for ring-XY. Ring closes only 2.5\% of the
positive token--complete difference. Part of the difference reflects legal
probability: token is always legal, while the corresponding Row means are
0.396 and 0.412. Conditioning each run on legal outputs gives optimum
probabilities of 0.05713, 0.02747 and 0.02195. The token--ring conditional
difference has a positive unadjusted bootstrap interval, but the
Wilcoxon comparison does not survive Holm correction ($p=0.448$).
These outcomes reflect the combined encoding, mixer order, initialization
and optimizer; they do not isolate the causal effect of legality.

Baseline selection also matters within this four-cell setting. As a
sensitivity analysis, an oracle chooses the better Row graph and penalty
for each evaluation instance and metric. This oracle is not an available
tuning procedure because it uses evaluation outcomes. Against it, token's
conditional-probability difference is 0.00602 with interval
$[-0.02239,0.03656]$, and its expected best cost after 512 measurements is
higher by 0.606, with interval $[0.083,1.128]$. Detailed conditional and
sampling comparisons appear in Appendix~\ref{sec:quality-diagnostics}.

The quality difference also weakens under transfer and sampled training.
With 2,048 samples per training call, four-cell token optimum probability
falls to 0.02637, compared with 0.01559 and 0.01533 for calibrated complete
and ring mixers. Its paired interval against complete-XY includes zero.
At five cells/eight sites, token's mean of 0.000447 is below both fixed
and calibrated complete-XY (0.000703 and 0.000821). At six cells/eight
sites, token trails fixed complete-XY and exceeds the calibrated mean,
but wins only four of twelve paired instances. Intervals against calibrated
complete-XY include zero at both larger sizes.

\subsection{Quality at comparable circuit cost}
\begin{table}[t]\centering\small\setlength{\tabcolsep}{3pt}
\caption{Quality at measured cohort-median budgets on twelve four-cell/six-site transfer
instances. Used is the selected Row cost divided by token's cost, using the
cohort median of five-seed instance medians. Quality averages three starts and
two optimizer seeds within instance. Positive optimum-probability differences favor
token; intervals are paired, family-stratified bootstrap 95\% intervals.
The artifact includes conditional quality, best-of-128 cost, legal probability and
Holm-corrected tests for all twelve resource selections.}
\label{tab:matched}\begin{tabular}{lllrrl}
\toprule
Target & Row graph & Budget & $p$ & Used (\%) & Token$-$Row optimum [95\% CI]\\
\midrule
Sherbrooke & complete & 2q & 3 & 98.6 & +0.046 [+0.025, +0.067]\\
Sherbrooke & complete & depth & 5 & 88.9 & +0.048 [+0.030, +0.066]\\
Path & complete & 2q & 3 & 84.1 & +0.046 [+0.025, +0.067]\\
Path & complete & depth & 5 & 83.3 & +0.048 [+0.030, +0.066]\\
Grid & complete & 2q & 2 & 62.5 & +0.038 [+0.015, +0.062]\\
Grid & complete & depth & 3 & 73.2 & +0.046 [+0.025, +0.067]\\
Sherbrooke & ring & 2q & 3 & 81.2 & +0.038 [+0.007, +0.065]\\
Sherbrooke & ring & depth & 6 & 93.5 & +0.030 [+0.010, +0.048]\\
Path & ring & 2q & 3 & 71.5 & +0.038 [+0.007, +0.065]\\
Path & ring & depth & 6 & 88.9 & +0.030 [+0.010, +0.048]\\
Grid & ring & 2q & 3 & 87.9 & +0.038 [+0.007, +0.065]\\
Grid & ring & depth & 4 & 84.0 & +0.038 [+0.017, +0.056]\\
\bottomrule
\end{tabular}

\end{table}

Matching layer count does not match circuit cost. Under the cohort-median
budgets in Table~\ref{tab:matched}, Row circuits can use two to six layers
for the cost of token at $p=3$; none reaches the $p=12$ search limit.
On Sherbrooke and the path, the depth budget admits complete-XY at $p=5$
and ring-XY at $p=6$. Their mean optimum probabilities are 0.00916 and
0.02698, compared with token's 0.05713. All twelve selected-budget
optimum-probability differences have positive unadjusted bootstrap
intervals. However, several budgets select the same trained circuit,
and their repeated entries are not independent observations.

The inference remains sensitive to the endpoint and multiplicity correction.
Against depth-matched ring at $p=6$, the conditional difference is 0.01275
with interval $[-0.01553,0.04020]$. Token's best-of-128 cost is lower by
0.439, with difference interval $[-0.931,-0.018]$, but the Holm-adjusted
Wilcoxon $p$-value is 0.879; for unconditional optimum probability it is
0.212. The positive unadjusted intervals consequently do not establish
superiority across the twelve budget comparisons. These comparisons also
hold the penalty and 200-call training cap fixed, rather than fully
optimizing each circuit family.

\subsection{Classical reference searches}
Uniform feasible sampling, simulated annealing and multistart greedy
search count every objective query, including initialization and repeated
proposals. At 512 queries, multistart greedy's mean best-cost/exact-optimum
ratios are 1.0000, 1.0055 and 1.0063 at four, five and six cells. The
corresponding ratios from 512 final measurements of analytically trained
token circuits are 1.0633, 1.3375 and 1.1729. Quantum training adds work;
sampled training alone can consume 409,600 measurement outcomes. Thus the
classical methods obtain better solutions even before charging QAOA for
training. These small instances support the circuit-construction study,
but do not demonstrate a competitive quantum placement solver.

\subsection{Integration with a physical-design flow}
\label{sec:eda-results}
The placement method also produces assignments that can be used in a
conventional physical-design flow. In a separate study of 72 four-cell,
six-site windows from six RTL designs, all 72 selected QAOA assignments and
216 classical assignments complete OpenROAD clock-tree synthesis and global
routing with zero overflow (Table~\ref{tab:eda-integration}). This is a
concrete integration result for legal local placements; it does not measure
the effect of diagonal completion on downstream design quality. The search
comparison remains favorable to classical methods, including on the 48
windows that are not initially optimal. This study uses a different objective
and token schedule from the synthetic experiments; Appendix~\ref{sec:eda}
describes the design selection, sampling and downstream checks.

\begin{table}[t]
\centering\small
\caption{OpenROAD integration and local-search outcomes on six RTL designs.
Optimum counts refer to windows: QAOA uses 4,096 final measurements per
optimizer seed, with four seeds and additional training; each of the two
recorded budgeted classical searches solves all windows by 360 objective
queries per seed. These are not equal-work comparisons. The downstream row
counts selected assignments, with three classical methods per window;
downstream success means clock-tree synthesis and global routing with
zero overflow. The 72 windows are nested within six designs.}
\label{tab:eda-integration}
\begin{tabular}{lrr}
\toprule
Outcome & QAOA & Classical \\
\midrule
Optimum found, all windows & 71/72 & 72/72 \\
Optimum found, initially suboptimal windows & 47/48 & 48/48 \\
Downstream flow completed, assignments & 72/72 & 216/216 \\
\bottomrule
\end{tabular}
\end{table}

\section{Discussion and Limitations}
We show that an invariant offers more than a way to enforce constraints:
it also creates freedom to simplify the operator implementing the objective.
In this application, choosing entries outside the reachable subspace through
weighted-$\ell_1$ completion improves on inexpensive distance
extensions under Gray synthesis. The much smaller improvement under generic
diagonal synthesis shows why completion cannot be assessed in isolation
from the compiler. Similarly, the full-circuit results show why a phase
saving alone is insufficient: mixer cost, available parallelism and hardware
connectivity can determine the final resource ranking.

The two constructions address different representation choices. Joint and
coordinate-separated LPs both require $O(m^4)$ dense storage. The recurrence
avoids this cost when register bits have the required coordinate meaning;
arbitrary label permutations need not preserve its sparse support. Our
comparisons hold labels fixed within each case. Optimizing the label
assignment itself is a further compilation choice, and coordinate encodings
for irregular or sparse site sets may require a different width. Arithmetic
or reversible-lookup distance circuits, domain-wall encodings and additional
completion of the Row operator remain outside the comparison.

The token encoding is most attractive when the number of vacancies is small.
Distinguishable EMPTY registers introduce $(m-n)!$ redundant labels and can
erase the width benefit as vacancies increase. The simulations retain all
$m!$ token states and cover only four to six cells. They identify neither
a scaling crossover nor a noisy-device benefit. Finite-shot training is
limited to four cells; the routing targets omit noise, drift, crosstalk and
readout effects. Extra ancillas may incur physical costs absent from gate
counts. Section~\ref{sec:eda-results} establishes a practical integration
path for the placements, while its classical comparisons reinforce the
absence of a solver advantage at this scale. A stronger application claim
would require gains that survive mixer and routing costs, training and
readout budgets, and comparison with classical search on larger instances;
the present results do not locate such a regime.

\section{Conclusion}
Exact diagonal completion uses freedom outside the reachable subspace to
change circuit cost without changing the ideal algorithm. For Manhattan
placement phases, weighted-$\ell_1$ completion reduces Gray-synthesized CX
counts relative to four inexpensive extensions on all 96 evaluated cases
with unused codes. A sparse coordinate recurrence provides a direct
construction for regular arrays and avoids the dense LP, although it does
not always match the LP's gate count. The size of the benefit depends on
synthesis, and full-circuit depth remains sensitive to the mixer and
connectivity. Placement quality is likewise baseline-dependent, with
classical search stronger on the tested instances. Our results make the
choice of an exact extension a concrete circuit-design decision: preserve
the required action, choose the remaining entries together with the
synthesis method, and measure the benefit in the complete implementation.

\section{Artifact Availability}
The reproducibility artifact is available at
\href{https://doi.org/10.5281/zenodo.22970077}{doi:10.5281/zenodo.22970077}.
It contains the source code, study protocols, benchmark instances, numerical
results, coefficient vectors, analysis scripts, dependency specifications
and manuscript sources, together with instructions for verification and
reproduction. Code is released under the MIT license; original data and
documentation are released under CC BY 4.0. Third-party material retains
its existing license.

\begin{acks}
The computation for this work was performed on the high performance computing
infrastructure provided by Research Support Services at the University of
Missouri, Columbia MO. DOI: \href{https://doi.org/10.32469/10355/97710}{10.32469/10355/97710}

The computation for this work was performed on the University of Missouri's
Quantum Innovation Center, in partnership with IBM Quantum and facilitated
by Research Support Solutions at the University of Missouri, Columbia MO.
DOI: \href{https://doi.org/10.32469/10355/107781}{10.32469/10355/107781}

This work received no funding. The authors declare no conflicts of interest.
\end{acks}

\appendix
\section{Study Provenance and Additional Settings}
\label{sec:provenance}
The study began with small placement and mixer experiments used for
development. Their overlapping leave-one-out folds are not independent
confirmation. The initial confirmation protocol froze the selected token
schedule, penalty, instances and endpoints before the full run. One case,
\texttt{fresh4\_sparse\_grid\_904000}, already had sixteen smoke-test runs;
it is included in the stated 36-case cohort and excluded in a separate
sensitivity analysis. The other 35 four-cell and all twelve five-cell
instances were previously unevaluated.

The initial quality endpoint was unconditional optimum probability.
Confirmation uses parameter seeds 11, 17, 23 and 31, giving 1,152 runs
across two methods and three starts. Transfer uses seeds 41 and 53, with
576 original fixed-penalty runs including four-cell sampled training.
Exact optima evaluate outcomes but are not the training loss. The reduced
simulators retain $m!$ token states and $m^n$ Row states: at six cells and
eight sites these are 40,320 and 262,144. For fixed $n=4,5,6$, $m!$ first
exceeds $m^n$ at $m=7,8,10$, respectively.

The additional ring, workspace and coordinate comparisons were specified
in commit \texttt{861fb30} after earlier results were known and before their
implementation and execution. Ring adds 432 confirmation and 576 transfer
runs with matched starts and seeds. Its pre-execution interpretation rule
called a ring result competitive if it closed at least 75\% of the positive
calibrated four-cell mean gap, or matched token's mean. This was a rule for
interpreting the study, not an equivalence test. The extension, approximation
and resource-budget controls were specified in commit \texttt{e40a054}
(specification SHA-256 prefix \texttt{8a98bc4d}). Both sets are subsequent
robustness analyses, not external preregistrations or new independent samples.
The artifact preserves failures and separately specified execution repairs.

Penalty calibration uses three starts and seed 61 on eighteen development
instances, giving 216 runs per graph. Selection is by problem size, with
ties resolved by candidate order. The four-cell choice is reused for
sampled training. The depth sensitivity sweep uses $p=1,2,3,4,6$, plus any
other depth selected by a resource budget. Its random-order token control
uses independently seeded permutations of the same line edges for each
layer and instance. Every distinct selected Row graph/depth is trained
once per start and seed and reused across budgets that select it.

Bootstrap intervals use 20,000 within-family draws. This convention was
adopted during initial confirmation analysis; later specifications fix
seed 1144004 for the earlier controls and 1209505 for the additional
endpoints. Paired differences are rounded to $10^{-12}$ for zero handling
in the signed-rank tests. Those tests discard zero differences and use
Holm correction within the stated nine- or twelve-comparison families,
separately by endpoint and test. The oracle comparison chooses among fixed
and calibrated complete/ring baselines for each evaluation instance and
endpoint; it is a sensitivity bound using information unavailable at tuning
time.

\subsection{Earlier synthesis and routing comparisons}
The initial 100-case phase study used fixed mask order under Qiskit 2.4.1.
L1 reduced CX relative to zero extension in all sixty cases with unused
codes, with median reductions of 47.0\%, 68.2\% and 43.9\% at six, nine
and twelve sites; all forty power-of-two controls tied. The 60-case
replication added beam and Gray synthesis, with Gray reductions of 27.6\%,
53.9\% and 21.3\%. Its numerical comparisons also used angles 0.17,
0.371 and 1.03, taking angle medians within instances. The later 160-case
extension comparison reran all methods under Qiskit 2.5.2 and used
$\gamma=0.371$ for numerical circuits. Results from different compiler
versions or symbolic/numerical inputs are kept separate.

An early resource comparison bound token angles while leaving Row angles
symbolic and retained redundant operations. Its apparent 15.3\% ECR
saving was withdrawn. A symmetric rerun used twenty routing seeds
101--120 on 36 confirmation and 32 earlier instances; the latter retained
their rotating token schedule. Qiskit 2.4.1 native crashes led to a documented
rerun of every routed case under 2.5.2. Token with zero extension and beam
synthesis used more ECR gates on all 68 instance medians. On the 36
confirmation cases, changing to L1 with the same beam synthesis reduced
token ECR by 18.2\% and depth by 21.3\%, but still exceeded complete-XY
by 14.9\% in ECR and 162.9\% in depth. These counts use different phase
synthesis from the main Gray comparison and are not combined with it.

The initial workspace study uses seeds 101--105; the main Gray/L1
comparison uses 211, 223, 227, 229 and 233. Software versions are
Python 3.10.12, Qiskit 2.5.2, Runtime 0.46.1, NumPy 2.2.6 and SciPy 1.15.3;
PyZX controls use version 0.10.3. Dependency locks accompany the artifact.
The classical query-accounting correction reruns all 3,888 search records
with instrumented objective calls, including initialization and repeated
proposals; every search outcome is unchanged. Annealing temperature was
selected on six earlier cases before transfer outcomes.

\section{Circuit Implementation Details}
\label{sec:algorithms}
The following algorithms give the preparation, register swap and phase
synthesis procedures used in the experiments.
A mask is a bit set identifying the qubits in a Pauli-$Z$ product. Historical
\emph{mask order} means increasing net index, then increasing integer mask;
beam order retains separate net terms, whereas Gray synthesis combines equal
supports across nets.

\begin{algorithm}[H]
\caption{Token basis preparation}
\label{alg:prep}
\textbf{Input:} injective assignment $f$, $m$ valid sites, $n$ real cells,
$k=\lceil\log_2m\rceil$. All qubits initially zero.
\begin{enumerate}
\item Set the label of real register $u$ to $f(u)$.
\item Sort the unoccupied site labels and assign them, in order, to registers
$n,\ldots,m-1$ representing distinguishable EMPTY tokens.
\item For each register, apply $X$ to exactly its label's set bits.
Keep all mixer ancillas zero. Output the resulting permutation basis state.
\end{enumerate}
This uses at most $mk$ single-qubit gates, no entangling gate, and depth at most one.
\end{algorithm}

\begin{algorithm}[H]
\caption{Clean-ancilla partial register swap $\exp(-i\beta S_{ab})$}
\label{alg:swap}
\textbf{Input:} two $k$-qubit registers and one clean ancilla $h$.
\begin{enumerate}
\item Apply $H_h$; for $i=0,\ldots,k-1$, apply
$\operatorname{CSWAP}(h,a_i,b_i)$; apply $H_h$.
\item Apply $R_Z(2\beta)$ to $h$.
\item Apply $H_h$; repeat the same $k$ controlled swaps; apply $H_h$.
\end{enumerate}
Output leaves $h=0$ for every input superposition. Each Fredkin gate can be
lowered without workspace as
$\operatorname{CX}(b_i,a_i)$,
$\operatorname{CCX}(h,a_i,b_i)$,
$\operatorname{CX}(b_i,a_i)$.
Before lowering the Toffolis, this uses $4k$ CX, $2k$ Toffolis, four Hadamards
and one variable rotation. A standard Toffoli implementation uses six CX,
two Hadamards and seven $T/T^\dagger$ gates
\cite{shende2009toffoli}. An explicit bound per partial swap is therefore
$16k$ CX, $4k+4$ Hadamards, $14k$ fixed $T/T^\dagger$ gates and one variable
$R_Z$, before cancellation. The lowered component counts include all these
gates in the declared basis. The shared control serializes this
construction to $O(k)$ depth; no additional comparison or multi-control
workspace is hidden in its width.
\end{algorithm}

Let $A=H_h\,\operatorname{CSWAP}_{ab}\,H_h$. On a SWAP eigenstate of eigenvalue
$\sigma\in\{+1,-1\}$, $A$ writes $(1-\sigma)/2$ into $h$. The intervening
rotation supplies phase $e^{-i\beta\sigma}$ and $A$ uncomputes it. Thus the
entire eigenspace-resolution cost is included in Algorithm~\ref{alg:swap};
its $O(k)$ count establishes $O(pm\log m)$ primitive mixer work on a line.
This upper bound precedes cancellation and hardware routing.

\begin{algorithm}[H]
\caption{Valid-domain completion and parity synthesis}
\label{alg:phase}
\begin{enumerate}
\item Form the $m^2$ valid distances, including equal-site pairs. Solve
Eq.~\eqref{eq:lp}, or use an explicitly specified exact extension. Threshold
coefficients at $10^{-10}$ and reject maximum valid residual $\ge10^{-8}$.
\item For every net, map its pair-register mask to a global qubit support and
multiply its coefficient by the net weight. Record the constant as global phase.
\item \textbf{Beam search:} start four paths with smallest
(support size, mask, term index). Extend each path by up to ten unused terms,
ranked by decreasing support overlap with its last term, then increasing
symmetric difference, support size, mask and index. Keep the four paths with
smallest explicit CX count after adjacent cancellation; retain stable ties.
Repeat until complete. For each term compute parity onto the last bit in its ordered support
list, rotate, then reverse its CX sequence, cancelling adjacent equal CX gates.
\item \textbf{Gray synthesis:} first sum coefficients with equal global supports.
Group by highest bit, sort remaining control masks by inverse reflected-Gray
rank, update only changed controls between rotations, and restore parity after
finishing each target. Skip zero terms without approximating nonzero terms.
\end{enumerate}
For adjacent supports $A,B$ with the same target, appending $B$ changes the
score by $2(|B|-1)-2t$, where $t$ is the common prefix length of their ordered
control lists; use $t=0$ for different targets. Each $R_Z$ stops cancellation
across earlier boundaries. Caching this score and static candidate rankings
therefore preserves the archived beam decisions, including stable ties, while
avoiding repeated partial-circuit reconstruction. The beam remains a bounded
ordering heuristic, not an optimality certificate.
\end{algorithm}

\section{Quality Diagnostics}
\label{sec:quality-diagnostics}
\subsection{Cohort sensitivity}
The initial confirmation gives token mean optimum probability \FourToken{} versus
\FourPenalty{} for complete-XY on four-cell/six-site instances, with \FourWins/36 positive paired
differences. Excluding the smoke case gives \UnseenFourToken{} versus
\UnseenFourPenalty{} and \UnseenFourWins/35 positives. At five cells/seven sites the means are
\FiveToken{} and \FivePenalty{}; only \FivePoorWins/12 poor-start differences
are positive. 

\begin{figure}[t]\centering
\includegraphics[width=.97\textwidth]{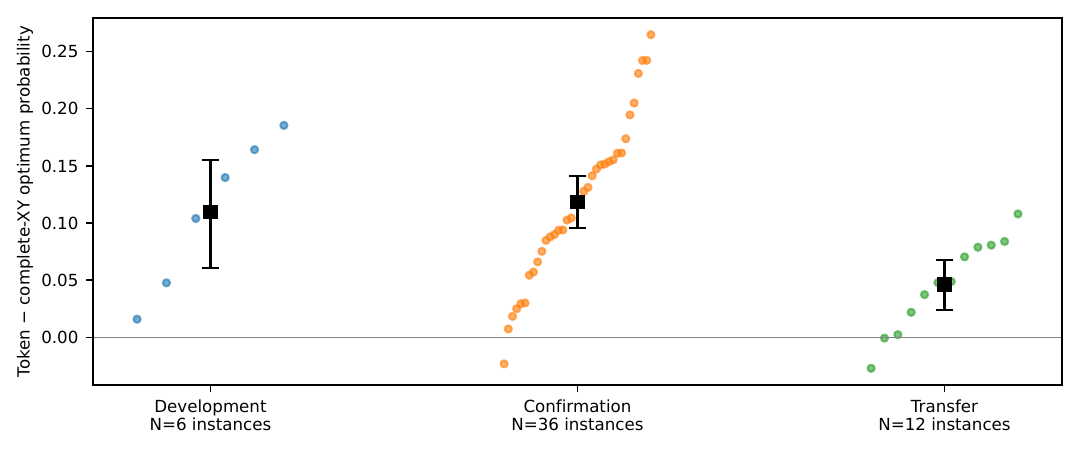}
\caption{Same-size cohort sensitivity. Points pair token with complete-XY
within each four-cell/six-site instance; black squares and whiskers are cohort means and
20,000-draw instance-bootstrap 95\% intervals. Cohorts are unpaired.
Development uses two starts, two seeds, six original parameters and a 40-call
cap; confirmation/transfer use three starts, four/two seeds and five free
parameters. Their geometries and generators also differ.}
\Description{Three separate point distributions show development, confirmation
and transfer token-minus-complete-XY optimum probability, with no lines pairing
instances across cohorts.}
\label{fig:cohorts}
\end{figure}

At four cells and six sites, the mean gap is 0.1094 in development,
0.1184 in confirmation and 0.0460 in transfer (Fig.~\ref{fig:cohorts}).
The gap therefore does not shrink monotonically from development onward.
Because geometry, initialization and protocol also change, the transfer
decrease cannot be attributed solely to selection bias.

\subsection{Conditional and matched-shot endpoints}
\begin{table}[t]\centering\small\setlength{\tabcolsep}{3pt}
\caption{Legality and conditional quality at matched $p=3$.
$L$ is legal probability; $Q$ is the run-wise ratio of optimum to legal
probability, averaged over starts/seeds and then instances. Subscripts
$T,C,R$ denote token, complete-XY and ring-XY; $L_T=1$ throughout.
Confirmation has 36 instances; each transfer condition has 12. Fixed uses
$\lambda=5$; calibrated penalties are selected independently for each Row graph.
Closure is $(Q_R-Q_C)/(Q_T-Q_C)$ and is undefined when token does not exceed
complete-XY. Shots denotes 2,048-shot training; endpoints use the final ideal
distribution. Postselection still incurs the discarded measurement shots.}
\label{tab:ring}\begin{tabular}{lllrrrrrr}
\toprule
Cohort & Size & Train & $L_C$ & $L_R$ & $Q_T$ & $Q_C$ & $Q_R$ & Closure\\
\midrule
Confirm. & 4 & exact & 0.375 & 0.447 & 0.156 & 0.0856 & 0.0925 & 9.8\%\\
Fixed & 4 & exact & 0.336 & 0.390 & 0.0571 & 0.0228 & 0.027 & 12.0\%\\
Fixed & 4 & shots & 0.327 & 0.376 & 0.0264 & 0.0144 & 0.0241 & 81.4\%\\
Fixed & 5 & exact & 0.252 & 0.368 & 0.000447 & 0.00281 & 8.73e-05 & --\\
Fixed & 6 & exact & 0.137 & 0.232 & 0.000188 & 0.00269 & 2.48e-05 & --\\
Calibrated & 4 & exact & 0.396 & 0.412 & 0.0571 & 0.0275 & 0.0219 & -18.6\%\\
Calibrated & 4 & shots & 0.379 & 0.396 & 0.0264 & 0.0296 & 0.0231 & --\\
Calibrated & 5 & exact & 0.269 & 0.477 & 0.000447 & 0.00328 & 9.85e-06 & --\\
Calibrated & 6 & exact & 0.242 & 0.348 & 0.000188 & 0.000122 & 3.59e-05 & -131.2\%\\
\bottomrule
\end{tabular}

\end{table}

Conditioning on legal outputs changes the interpretation
(Table~\ref{tab:ring}). On confirmation, conditional optimum probability is 0.15579
for token, 0.08561 for complete-XY and 0.09245 for ring: ring closes 9.8\%
of this conditional gap. On calibrated four-cell analytic transfer, the
corresponding values are 0.05713, 0.02747 and 0.02195. The token-minus-ring
conditional difference is 0.03518 with paired interval $[0.00353,0.06412]$,
but its post hoc Wilcoxon test does not survive Holm correction across the
nine cohort/training conditions ($p_{\mathrm{Holm}}=0.448$). Confirmation does
($p_{\mathrm{Holm}}=0.000827$). These descriptive robustness tests temper a
claim based on the original optimum-probability endpoint alone.

Matched-shot best cost gives another view. With 128 final shots, calibrated
four-cell analytic token and ring means are 23.553 and 24.151: a paired
difference of $-0.597$ with interval $[-1.137,-0.169]$, favoring token.
The corresponding sampled-training difference is $+0.126$ with interval
$[-0.459,0.907]$. Thus the analytic conditional and best-shot results do
not establish a sampled-training advantage. Every shot, including an illegal
Row output, is charged; the known initial placement is the fallback.

The evaluation oracle discussed in Section~\ref{sec:quality} also reduces
token's unconditional optimum-probability difference to 0.02605, with
interval $[-0.00014,0.05396]$. Full paired summaries, including medians,
win/tie/loss counts and corrected tests, are available in the artifact.

\subsection{Depth and schedule sensitivity}
\begin{figure}[t]\centering
\includegraphics[width=.97\textwidth]{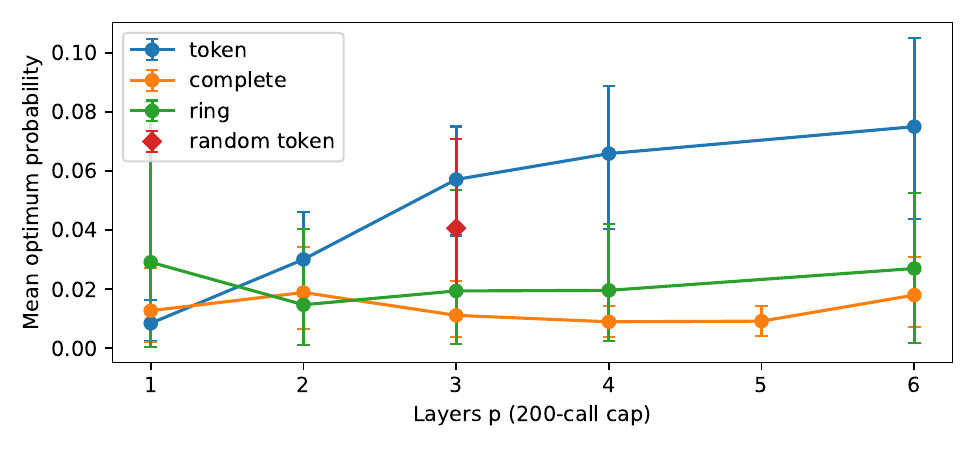}
\caption{Depth sensitivity under fixed $\lambda=5$, three starts, two seeds
and the same 200-call COBYLA cap. Points are twelve-instance mean optimum probabilities;
whiskers are family-stratified bootstrap 95\% intervals. Additional selected
Row depths are included; the random token schedule is a separately seeded
$p=3$ control. Different depths are different ansatzes.}
\Description{Optimum probability versus depth for token, complete-XY and ring-XY,
with an additional random token schedule point at depth three.}
\label{fig:submission-quality}
\end{figure}

The depth sweep (Fig.~\ref{fig:submission-quality}) is not monotone for the Row mixers: ring's $p=1$ mean is 0.02911,
slightly above its $p=6$ mean, whereas token increases from 0.00844 at $p=1$
to 0.07501 at $p=6$. A separately seeded random token schedule at $p=3$ gives
0.04063, below the declared schedule's 0.05713. This descriptive schedule
control does not isolate a causal effect of EMPTY prioritization.

\section{Additional Synthesis Diagnostics}
\label{sec:synthesis-diagnostics}
The numeric controls hold $\gamma=.371$ fixed and compare each exact extension
with the same fixed angle (Table~\ref{tab:numeric}). Extracted PyZX circuits
are checked against their parsed inputs; the artifact separately records
OpenQASM angle-conversion bounds, so a symbolic exactness claim is not inferred
from floating-point parsing. Extension solve times, synthesis times, weighted
norms, support sizes, one-qubit rotations and depth accompany every case.
Timings describe these implementations and allocations, not asymptotic
performance or a cross-machine speed benchmark. The execution record identifies
identical-input PyZX cache hits; mixed cache timings are not an uncached
compiler-speed comparison.

Independent reconstruction checks all 1,760 stored coefficient vectors. The
maximum valid-distance residual for exact extensions is $5.69\times10^{-14}$.
Across 800 numeric phase inputs, QASM angle conversion has a maximum
triangle-inequality operator-error bound of $4.76\times10^{-8}$ up to global
phase. These two tolerances concern different objects. PyZX's full-reduction
and extraction pipeline uses more CX than numeric Gray in every tested input;
for replication L1 cases its paired median excess is 125.3\%, 130.4\% and 167.1\%
at six, nine and twelve sites. This result applies to that PyZX reduction and extraction procedure.

Table~\ref{tab:truncation} summarizes approximate synthesis.
At $\eta=.001$, approximate truncation leaves the replication median CX unchanged
at all three non-power-of-two sizes. At $\eta=.05$, L1 saves a further
6.3\%, 15.3\% and 16.3\%, and virtual completion saves 4.5\%, 11.9\% and
10.0\%, relative to each exact input. The corresponding maximum L1 valid
residuals are 1.375, 1.78125 and 2.84375 distance units. This additional saving
buys a bounded objective perturbation; it is not evidence of unchanged
optimization quality.

\begin{table}[t]\centering\small
\caption{Numeric synthesis controls on the replication cohort, twelve cases per
site count. Entries are paired median CX percentage changes relative to numeric
Gray for the same extension. Negative means fewer CX than Gray. Other
extensions and power-of-two controls are retained in the artifact.}
\label{tab:numeric}\begin{tabular}{lrrr}
\toprule
Extension & $m$ & Generic / Gray (\%) & PyZX / Gray (\%)\\
\midrule
L1 & 6 & +32.2 & +125.3\\
L1 & 9 & +107.0 & +130.4\\
L1 & 12 & +30.5 & +167.1\\
virtual & 6 & +8.4 & +203.8\\
virtual & 9 & +4.6 & +226.1\\
virtual & 12 & +4.6 & +245.4\\
\bottomrule
\end{tabular}

\end{table}

\begin{table}[t]\centering\small
\caption{Greedy approximate Gray controls on twelve replication cases per site count.
CX changes are paired medians relative to the same exact extension. The maximum
observed valid-distance residual is reported alongside the allowed relative
tolerance $\eta$; they are different quantities. All three declared tolerances,
power-of-two cases and assignment-error bounds are in the artifact.}
\label{tab:truncation}\begin{tabular}{lrrrr}
\toprule
Extension & $m$ & $\eta$ & CX change (\%) & Max. valid residual\\
\midrule
L1 & 6 & 0.001 & +0.0 & 0\\
L1 & 6 & 0.050 & -6.3 & 1.38\\
virtual & 6 & 0.001 & +0.0 & 0\\
virtual & 6 & 0.050 & -4.5 & 1.44\\
L1 & 9 & 0.001 & +0.0 & 0.0312\\
L1 & 9 & 0.050 & -15.3 & 1.78\\
virtual & 9 & 0.001 & +0.0 & 0.0312\\
virtual & 9 & 0.050 & -11.9 & 1.83\\
L1 & 12 & 0.001 & +0.0 & 0.0312\\
L1 & 12 & 0.050 & -16.3 & 2.84\\
virtual & 12 & 0.001 & +0.0 & 0.0469\\
virtual & 12 & 0.050 & -10.0 & 2.72\\
\bottomrule
\end{tabular}

\end{table}

\begin{figure}[t]\centering
\includegraphics[width=.97\textwidth]{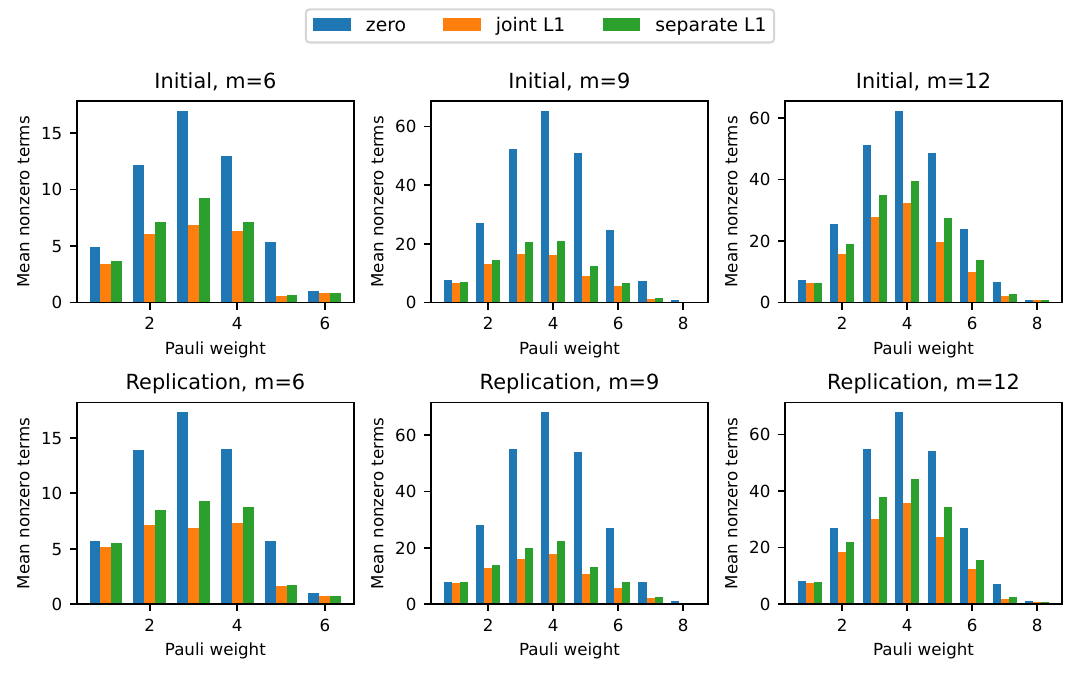}
\caption{Numbers of two-register Walsh terms by Pauli weight (mask popcount) after the $10^{-10}$
numerical threshold. Bars average 20 initial-cohort or 12 replication-cohort cases at each site
count. These are coefficient supports before net aggregation and synthesis;
they are not final CX counts. Zero, joint L1 and coordinate-separated L1 use
the retained coordinate-control vectors.}
\Description{Six histograms compare mean nonzero Walsh-term counts by mask
popcount for three completion policies, separately for each cohort and site count.}
\label{fig:popcount}
\end{figure}

Figure~\ref{fig:popcount} shows how completion changes the distribution of
Pauli weights. Equal CX counts need not imply equal Walsh spectra. With zero extension
and Gray synthesis, the nine- and twelve-site replication cases both use four site bits and yield 1,214 CX, but their median
aggregated rotation counts are 1,170 and 1,185, with twelve and ten distinct
support sets across their respective twelve cases. At sixteen sites the
median is 1,333 CX; the
same binary width does not fix the active parity network.

\section{Local Placement Integration with OpenROAD}
\label{sec:eda}
A separate integration study inserts local assignments into the OpenROAD
Flow Scripts (ORFS, commit \texttt{be0dca0b1fd4}) using Nangate45 and a
recorded container image. It covers GCD, AES, Ibex, Mempool Group, SweRV
and TinyRocket. For each design, a deterministic extractor produces 120
candidate four-cell/six-site windows with fixed external net anchors and
selects four cell-disjoint windows from each connectivity tertile by a
seeded hash ordering. The artifact specifies the exact selection rule.
The resulting 72 windows are nested within six designs; 24 are initially
optimal. This study uses a ring/reversed token schedule, an
improvement-probability objective, four optimizer seeds and 4,096 final
measurements per optimizer seed, rather than the synthetic protocol in the main text.

There are only 360 legal assignments per window. With $g$ optimal
assignments, uniform feasible sampling finds one in $B$ draws with
probability $1-(1-g/360)^B$, which approaches one at the study's readout
budget. QAOA finds an optimum in 71 of 72 windows. Greedy search finds
69 with 32 neighbor evaluations (33 objective queries including the known
initial cost), and both recorded classical searches find all 72 by 360
queries per seed. Among the 48 windows not initially optimal, these counts
are 47 for QAOA and 45 for short greedy search; both classical methods
again solve all 48 by 360 queries. QAOA training incurs additional work.

For each window, the downstream comparison inserts the selected QAOA
assignment and assignments from exact search, greedy search and simulated
annealing into the design flow. The resulting 288 assignments give the
completion counts in Table~\ref{tab:eda-integration}. These checks establish
that the selected local placements can be used by subsequent flow stages;
zero routing overflow does not by itself establish improved timing or
routed wirelength.

\section{Notation}
\label{sec:notation}
Table~\ref{tab:notation} collects the principal symbols. Site coordinates
$x_s$ and binary assignment variables $x_{u,s}$ are distinct; the layer index
$\ell$ and the coordinate bit width $\ell$ are local to different constructions.

\begin{table}[H]
\centering\small
\caption{Principal notation, with the section in which each meaning is introduced.}
\label{tab:notation}
\begin{tabular}{lp{.62\linewidth}l}
\toprule
Symbol & Meaning & Section \\
\midrule
$n,m$ & Numbers of real cells and valid sites & \ref{sec:encoding} \\
$r_s=(x_s,y_s), f$ & Site coordinates and an injective cell-to-site assignment & \ref{sec:encoding} \\
$E,w,C(f),d(a,b)$ & Weighted nets, a net weight, placement cost and Manhattan distance & \ref{sec:encoding} \\
$x_{u,s},\lambda$ & One-hot assignment variable and penalty parameter & \ref{sec:encoding} \\
$k,p$ & Bits per site register and QAOA layers & \ref{sec:encoding} \\
$\beta_\ell,\gamma_\ell,\theta$ & Mixer angle, phase angle and parameter vector; $\ell$ indexes layers & \ref{sec:encoding} \\
$z,j,W,Z_j$ & Encoded pair, bit mask, Walsh matrix and Pauli-$Z$ product & \ref{sec:completion} \\
$c_j,\widetilde d,V,\epsilon$ & Walsh coefficient, completed distance, valid encoded pairs and maximum valid-pair residual & \ref{sec:completion} \\
$\ell,a',b',A,B$ & Coordinate bit width, lower-bit integers and high-bit signs in the recurrence & \ref{sec:structured} \\
$T_\ell,k_x,k_y$ & Nonzero coefficient count and coordinate bit widths & \ref{sec:structured} \\
$D,\eta$ & Valid-site diameter and relative truncation tolerance & \ref{sec:extensions} \\
$B,d_{\max}$ & Penalty bound $Dd_{\max}/2$ and maximum weighted net degree & \ref{sec:training} \\
$F$ & Fraction of distance-table entries not fixed by valid labels & \ref{sec:phase-results} \\
$L,Q$ & Legal probability and optimum probability conditional on legality & \ref{sec:quality-diagnostics} \\
$g,B$ & Number of optimal assignments and sampling budget in the OpenROAD null model & \ref{sec:eda} \\
\bottomrule
\end{tabular}
\end{table}

\bibliographystyle{ACM-Reference-Format}
\bibliography{references}
\end{document}